\newcommand{\ie}{{\em i.e.}}
\newcommand{\etc}{{\em etc}}
\newcommand{\iid}{i.i.d.}
\newcommand{\apriori}{{\em a priori}}
\newcommand{\secref}[1]{Section~\ref{#1}}
\newcommand{\figref}[1]{Fig.~\ref{#1}}
\newcommand{\thrmref}[1]{Theorem~\ref{#1}}
\newcommand{\propref}[1]{Proposition~\ref{#1}}
\newcommand{\appref}[1]{Appendix~\ref{#1}}
\newtheorem{thrm}{\textbf{Theorem}}
\newtheorem{prop}{\textbf{Proposition}}
\newcommand{\abs}[1]{\left\vert#1\right\vert}
\newcommand{\norm}[1]{\Vert#1\Vert}
\def\blfootnote{\xdef\@thefnmark{}\@footnotetext}
\newenvironment{proof}[1][Proof]{\begin{trivlist}
\item[\hskip \labelsep {\bfseries #1}]}{\end{trivlist}}
\newcommand{\qed}{\nobreak \ifvmode \relax \else
      \ifdim\lastskip<1.5em \hskip-\lastskip
      \hskip1.5em plus0em minus0.5em \fi \nobreak
      \vrule height0.75em width0.5em depth0.25em\fi}
\def\BibTeX{{\rm B\kern-.05em{\sc i\kern-.025em b}\kern-.08em
    t\kern-.1667em\lower.7ex\hbox{E}\kern-.125emX}}
\begin{document}
\title{A Learning-Based Approach to Caching in Heterogenous Small Cell Networks}
\author{\IEEEauthorblockN{B. N. Bharath, K. G. Nagananda and H. Vincent Poor, \emph{Fellow, IEEE}}\thanks{B. N. Bharath is with People's Education Society Institute of Technology, Bangalore South Campus, INDIA, E-mail: \texttt{bharathbn@pes.edu}. K. G. Nagananda is with People's Education Society University, INDIA, E-mail: \texttt{kgnagananda@pes.edu}. H. Vincent Poor is with Princeton University, New Jersey, USA, E-mail: \texttt{poor@princeton.edu}}}

\date{}
\setlength{\droptitle}{-.25in}
\maketitle

\begin{abstract}
A heterogenous network with base stations (BSs), small base stations (SBSs) and users distributed according to independent Poisson point processes is considered. SBS nodes are assumed to possess high storage capacity and to form a distributed caching network. Popular files are stored in local caches of SBSs, so that a user can download the desired files from one of the SBSs in its vicinity. The offloading-loss is captured via a cost function that depends on the random caching strategy proposed here. The popularity profile of cached content is unknown and  estimated using instantaneous demands from users within a specified time interval. An estimate of the cost function is obtained from which an optimal random caching strategy is devised. The training time to achieve an $\epsilon>0$ difference between the achieved and optimal costs is finite provided the user density is greater than a predefined threshold, and scales as $N^2$, where $N$ is the support of the popularity profile. A transfer learning-based approach to improve this estimate is proposed. The training time is reduced when the popularity profile is modeled using a parametric family of distributions; the delay is independent of $N$ and scales linearly with the dimension of the distribution parameter.
\end{abstract}
\begin{IEEEkeywords} 
Caching; small cell networks; popularity profile; transfer learning.
\end{IEEEkeywords}

\section{Introduction} \label{sec:intorduction} \vspace{-0.05in}
The advent of multimedia-capable devices at economical costs has triggered the growth of wireless data traffic at an unprecedented rate. This trend is likely to continue, requiring wireless service providers to reevaluate design strategies for the next generation wireless infrastructure \cite{Furuskar2015}. A promising approach to address this problem is to deploy small cells that can offload a significant amount of data from a macro base station (BS) \cite{Chou2014}. Doing so, it is expected to lead to cost-effective integration of the existing WiFi and cellular technologies with improved performance of peak data traffic steering policies \cite{Bennis2013}. However, a potential shortcoming of the small cell infrastructure is that, during peak traffic hours, the backhaul link-capacity requirement to support data traffic is enormously high \cite{Kim2014}. Also, the cost incurred in deploying a high capacity backbone network for small cells can be quite high. Therefore, small cell-based solutions alone will not suffice to efficiently solve the quality of service requirements associated with peak traffic demands.

A noteworthy development in this direction is to improve the accessibility of data content to users by storing the most popular data files in the \emph{local caches} (intermediate servers such as gateways, routers, {\etc}.) of small cell BSs, with the objective of reducing the peak traffic rates. This is commonly referred to as ``caching'' and has attracted significant attention \cite{Lin2003} - \nocite{Bastug2014}\nocite{Sadjadpour2014}\cite{Niesen2012}. In the next subsection we mention a few references, which although by no means exhaustive, fairly indicate the scope and trend of research on caching.

\subsection{Literature review on caching}\label{subsec:related_work}
Caching has received considerable attention in the wireless communications literature. In \cite{Hu2003}, a two-layer hierarchical strategy termed New Snoop was proposed to cache the unacknowledged packets from mobiles and BSs to significantly enhance TCP performance. In \cite{Wang2014}, a technique based on the concept of content-centric networking was devised for caching in 5G networks, while in \cite{Golrezaei2014} caching of video files was proposed by exploiting the redundancy of user requests and storage capacity of mobile devices with {\apriori} knowledge of the locations of devices. In \cite{Lee2014}, the effects of cache size and cached-data popularity on a data access scheme were studied to mitigate the traffic load over the wireless channel. In \cite{Poularakis2014}, inner and outer bounds were proposed for the joint routing and caching problem in small cell networks, while in \cite{Fang2014} in-network caching was proposed for an information-centric networking architecture for faster content distribution in an energy-efficient manner. In-network caching was employed in \cite{Asaeda2015} for content-centric networks using a tool called ``contrace'' for monitoring and operating the network. The tradeoff between the performance gain of coded caching and delivery delay in video streaming was characterized in \cite{Pedarsani2014}. A polynomial-time heuristic solution was proposed in \cite{Guan2014} to address the NP-hard optimization problem of maximizing the caching utility of mobile users.

Caching has also made advances in device-to-device (D2D) communications. In \cite{Pyattaev2015}, a practical method was devised for data caching and content distribution in D2D networks to enhance assisted communications between proximate nodes. In \cite{Ji2013}, the outage-throughput tradeoff was characterized for D2D nodes, which obtained the desired file from nodes which had that file in its cache. In \cite{Golrezaei2012}, the conflict between collaboration-distance and interference was identified among D2D nodes to maximize frequency reuse by exploiting distributed storage of cached content. In \cite{Ji2013a}, coded caching was shown to achieve multicast gain in a D2D network, where users had access to linear combinations of packets from cached files. In \cite{Ji2014}, the throughput scaling laws of random caching, where users with pre-cached information made arbitrary requests for cached files, were studied. New caching mechanisms developed by modeling the network as independent Poisson point processes (PPPs) with full knowledge of the popularity profile can be found in \cite{Bacstuug2015a} - \nocite{Yang2015} \nocite{Chae2015}\cite{Tamoor-ul-Hassan2015}, while the most recent results on caching in D2D networks and video content delivery are reported in \cite{Zhang2015} and \cite{Li2015}.

Caching has been addressed from an information-theoretic viewpoint as well. In \cite{Maddah-Ali2014}, it was shown that when cached-content demand is uniformly distributed, joint optimization of caching and coded multicast delivery significantly improves the gains; this setup was extended to the case of nonuniform distributions on demand and to a decentralized setting in \cite{Niesen2014a} and \cite{Maddah-Ali2014a}, respectively. In \cite{Karamchandani2014}, coded caching was achieved for content delivery networks with two layers of caches.

\subsection{Main contributions of this paper}\label{subsec:main_contributions}
In the aforementioned references, the popularity profile of data files was assumed to be known perfectly. In practice, such an assumption cannot be reasonably justified; this was clearly highlighted in \cite{Blasco2014} - \nocite{Blasco2014a}\nocite{Sengupta2014}\nocite{Bastug2014a}\cite{Bacstuug2015}, where various learning-based approaches were employed to estimate the popularity profile. On the other hand, estimation procedures result in computational overhead especially in data-intensive realtime multimedia applications. Therefore, given the increasing demand for improving the quality of service for the end users, establishing the theoretical underpinnings of learning-based caching strategies is a topical research problem, and is the main subject of this paper.

In this work, we relax the assumption of {\apriori} knowledge of the popularity profile to devise a caching strategy. We consider a heterogenous network where the users, BS and small base stations (SBSs) are assumed to be distributed according to PPPs. Each SBS is assumed to employ a random caching strategy with no caching at the user terminal (see \cite{Ji2013}). A protocol model for communications is proposed using which a cost that captures backhaul link overhead that depends on the popularity profile is derived. Assuming a Poisson request model, a centralized approach is presented in which the BS computes an estimate of the popularity profile based on the requests observed during the time interval $[0,\tau]$; this estimate is then used in the cost function to optimize the caching probability. Thus, the actual cost incurred differs from the optimal cost, and this difference depends on the number of samples used to estimate the popularity profile. Further, the number of samples collected at the BS depends on the density of the Poisson arrival process and the training time during which the samples are collected. A lower bound on this training time is derived that guarantees a cost that is within $\epsilon > 0$ of the optimal cost. The results are improved using a transfer learning (TL)-based approach wherein samples from other domains, such as those obtained from a social network, are used to improve the estimation accuracy; the minimum number of source domain samples required to achieve better performance is derived. Finally, we model the popularity profile using a parametric family of distributions (specifically, the Zipf distribution \cite{Llorca2013}) to analyze the benefits offered.

The following are the main findings of our study:
\begin{enumerate}[(i)]
\item The training time $\tau$ is finite, provided the user density is greater than a predefined threshold.
\item $\tau$ scales as $N^2 \log N$, where $N$ is the total number of cached data files in the system.
\item Employing the TL-based approach, a finite training time can be achieved for \emph{all} user densities. In this case, the training time is a function of the ``distance'' between the probability distribution of the files requested and that of the source domain samples (the notion of distance will be made precise in the proof of \thrmref{thm:time_complexity_centralized_TL}).
\item When the popularity profile is modeled using a parametric family of distributions, the bound on the training time is independent of $N$, and scales only linearly with the dimension of the distribution parameter leading to a significant improvement in the performance compared to its nonparametric counterpart.
\end{enumerate}

The problem of periodic caching without the knowledge of the popularity profile, but with access to the demand history, was addressed in \cite{Blasco2014} and \cite{Blasco2014a}; however, the model and objective function considered in our work are different from those presented therein. Learning-based approaches to estimate the popularity profile for devising caching mechanisms have also been reported in \cite{Sengupta2014} - \nocite{Bastug2014a}\cite{Bacstuug2015}; while caching in femtocell networks without prior knowledge of the popularity distribution was considered in \cite{Golrezaei2013}, where it was shown that distributed caching was NP-hard and approximation algorithms were proposed for video content delivery. We would like to emphasize that the central focus of this paper is not on deriving new caching mechanisms. Our main contribution is the theoretical analysis of the implications of learning the popularity profile on the training time to achieve an offloading loss which is $\epsilon > 0$ close to the optimal policy. To the best of our knowledge, this is the first instance where an analytical treatment of training time and its relation to the probability distribution function of source domain samples has been reported in the literature on caching. Some preliminary aspects of this work can be found in \cite{Bharath2015}.

In \secref{sec:sys_model}, we present the system model followed by the main problem addressed in the paper. The two methods for estimating the popularity profile and its corresponding training time analysis are developed in \secref{sec:estimate_popularity}. The training time analysis when the popularity profile is modeled as a parametric family of distributions is presented in \secref{sec:param_pac_bound}. Numerical results are reported in \secref{sec:sims}. Concluding remarks are provided in \secref{sec:conclude}. The proofs of the theorems are relegated to appendices.

\section{System Model and Problem Statement} \label{sec:sys_model}
In this section, we present the system model followed by the main problem addressed in the paper. The notation used in the rest of the paper is as follows: $\Phi_u$, $\Phi_s$  and $\Phi_b$ ($\lambda_u$, $\lambda_s$ and $\lambda_b$) denote the points (densities) corresponding to the user, SBSs and BS, respectively; $k_x$ denotes the number of requests in $[0,\tau]$ by the user at $x$; $X_x^{(l)}$ denotes the $l^\text{th}$ request of the user $x$; $\lambda_r$ is the average number of requests per unit time. A heterogenous cellular network is considered where the set $\Phi_u \subseteq \mathbb{R}^2$ of users, the set $\Phi_b \subseteq \mathbb{R}^2$ of BSs, and the set $\Phi_s \subseteq \mathbb{R}^2$ of SBSs are distributed according to independent PPPs with density $\lambda_u$, $\lambda_b$ and $\lambda_s$, respectively, in the two-dimensional space \cite{Baccelli1997}. Each user independently requests a data-file of size $B$ bits from the set $\mathcal{F} \triangleq \{f_1,f_2,\ldots,f_N\}$; the popularity of data files is specified by the distribution $\mathcal{P} \triangleq \{p_1,\ldots,p_N\}$, where $\sum_{i=1}^N p_i = 1$ and is assumed to be stationary across time. In a typical heterogenous cellular network, the BS fetches a file using its backhaul link to serve a user. During peak data traffic hours, this results in an information-bottleneck both at the BS as well as in its backhaul link. To alleviate this problem, caching the most popular files (either at the user nodes or at SBSs) is proposed. The requested file will be served directly by one of the neighboring SBSs depending on the availability of the file in its local cache. The performance of caching depends on the density of SBS nodes, cache size, users' request rate, and the caching strategy. It is assumed that the SBS can cache up to $M$ files, each of length $B$ bits. Each SBS in $\Phi_s$ caches its content in an independent and identically distributed ({\iid}) fashion by generating $M$ indices distributed according to $\Pi \triangleq \{\pi_i: f_i \in \mathcal{F},~i=1,2,\ldots,N\}$, $\sum_{i=1}^N \pi_i = 1$ (see \cite{Ji2013}). One way of generating this is to roll an $N$ sided die $M$ times in an {\iid} fashion, where the outcomes correspond to the index of the file to be cached. Although this approach is suboptimal, it is mathematically tractable and the corresponding time complexity serves as a lower bound, albeit pessimistic, for optimal strategies.

We now present a simple communications protocol to determine the set of neighboring SBS nodes for any user in $\Phi_u$. Essentially, we let each SBS at location $y \in \Phi_s$ communicate with a user at location $x \in \Phi_u$ if $\norm{y - x} < \gamma$, ($\gamma > 0$); this condition determines the communication radius. In this protocol, we have ignored the interference constraint. The set of neighbors of the user at location $x$ is denoted
\begin{equation} \label{eq:Nxdefinition}
\mathcal{N}_x \triangleq \{y \in \Phi_s: \norm{y - x} < \gamma\}.
\end{equation}

\subsection{The main problem addressed in this paper}\label{subsec:problem_statement}
The user located at $x \in \Phi_u$ requests a data-file from the set $\mathcal{F}$, with the popularity profile chosen from the probability distribution function $\mathcal{P}$. The requested file will be served directly by a neighboring SBS at location $y \in \Phi_s$ depending on the availability of the file in its local cache, and following the protocol described in the previous paragraph. The problem of caching involves minimizing the time overhead incurred due to the unavailability of the requested file. Without loss of generality and for ease of analysis, we focus on the performance of a typical user located at the origin, denoted by $o \in \Phi_u$. The unavailability of the requested file from a user located at $o$ is given by
\begin{eqnarray}
\mathcal{T}(\Pi,\mathcal{P})  \triangleq \frac{B}{R_0}\mathbb{E} \sum_{i=1}^N\left[\mathbf{1}\{f_i \notin \mathcal{N}_o\} \right] \mathbf{1}\{f_i \text{ requested} \},\hspace{-2mm}
\label{eq:metric}
\end{eqnarray}
where $\mathcal{N}_o$ is as defined in \eqref{eq:Nxdefinition}, $R_0$ is the rate supported by the BS to the user, and $\frac{B}{R_0}$ is the time overhead incurred in transmitting the file from the BS to the user. Further, we use $f_i \notin \mathcal{N}_o$ to denote the event that the file $f_i$ is not stored in any of the SBSs in $\mathcal{N}_o$. The expectation is with respect to $\Phi_u$, $\Phi_s$ and $\mathcal{P}$. The indicator function $\mathbf{1}\{A\}$ is equal to one if the event $A$ occurs, and zero otherwise. We refer to $\mathcal{T}(\Pi,\mathcal{P})$ as the ``offloading loss'', which we seek to minimize:
\begin{eqnarray} \label{eq:opt_problem}
\min_{\Pi \succeq 0} ~~ \mathcal{T}(\Pi,\mathcal{P}) \\ \nonumber
\text{       subject to } \sum_{i=1}^N \pi_i = 1,
\end{eqnarray}
where $\pi_i \geq 0$, for $i=1,\ldots,N$. To solve the optimization problem \eqref{eq:opt_problem}, we need an analytical expression for $\mathcal{T}(\Pi,\mathcal{P})$ which is provided in the following theorem.

\begin{thrm}  \label{thm_mean_throughput}
For the caching strategy proposed in this paper, the average offloading loss is given by
\begin{eqnarray} \label{eq:mean_througput_expression}
\mathcal{T}(\Pi,\mathcal{P})  = \frac{B}{R_0} \left[ \sum_{i=1}^N \exp\{-\lambda_s \pi \gamma^2\left[1 - (1-\pi_i)^M\right]\} p_i\right] .
\end{eqnarray}
\end{thrm}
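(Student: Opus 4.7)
My plan is to exploit linearity of expectation, independence of the file request from the PPPs, the independent thinning property of the PPP, and the void probability of a PPP to evaluate the expectation term by term.

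First, since the event $\{f_i \notin \mathcal{N}_o\}$ depends only on $\Phi_s$ and the caching choices (which are independent of the file requests drawn from $\mathcal{P}$), I would pull the product expectation apart and write
\begin{equation*}
\mathcal{T}(\Pi,\mathcal{P}) = \frac{B}{R_0}\sum_{i=1}^N \mathbb{P}\{f_i \notin \mathcal{N}_o\}\, p_i,
\end{equation*}
using the linearity of expectation and $\mathbb{P}\{f_i \text{ requested}\}=p_i$. So the whole task reduces to computing $\mathbb{P}\{f_i \notin \mathcal{N}_o\}$ for each $i$.

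Next, I would handle the caching randomness at a single SBS: because each SBS rolls an $N$-sided die $M$ times in an i.i.d.\ manner with probability vector $\Pi$, the probability that file $f_i$ is \emph{not} present in the cache of a given SBS equals $(1-\pi_i)^M$. I would then invoke the independent thinning property of a PPP: marking each SBS of $\Phi_s$ independently as ``caches $f_i$'' or ``does not cache $f_i$'' produces two independent PPPs, the former with intensity $\lambda_s[1-(1-\pi_i)^M]$. The event $\{f_i \notin \mathcal{N}_o\}$ is exactly the event that the thinned PPP of SBSs caching $f_i$ has \emph{no} points in the disk $B(o,\gamma)$ of area $\pi\gamma^2$. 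By the void probability of a PPP,
\begin{equation*}
\mathbb{P}\{f_i \notin \mathcal{N}_o\} = \exp\{-\lambda_s \pi \gamma^2 [1-(1-\pi_i)^M]\}.
\end{equation*}
Substituting back into the sum yields the claimed expression.

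The routine calculations are the two probability computations (caching and void); the only subtlety, and the step I would be most careful about, is establishing the independence structure between the request $X_o$ of the typical user and the configuration $(\Phi_s, \text{caches})$, and then correctly combining the per-SBS caching randomness with the PPP via independent thinning. Once that is justified, the Slivnyak--Mecke style conditioning on the typical user at the origin does not disturb $\Phi_s$ (since $\Phi_u$, $\Phi_s$ are independent), and the final formula follows directly.
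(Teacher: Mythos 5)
Your proposal is correct and follows essentially the same route as the paper: both reduce the expectation by linearity and independence to $\frac{B}{R_0}\sum_{i=1}^N \Pr\{f_i \notin \mathcal{N}_o\}\,p_i$, compute the per-SBS miss probability $(1-\pi_i)^M$, and then average over the Poisson number of SBSs within radius $\gamma$. The only cosmetic difference is that you invoke independent thinning and the void probability of the thinned PPP, whereas the paper conditions on $n_s = \abs{\mathcal{N}_o}$ and evaluates the Poisson probability generating function at $(1-\pi_i)^M$ explicitly --- the same computation.
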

\begin{proof}
See Appendix \ref{app:througput_derivation}.
\end{proof}

We note that, solving the optimization problem posed in \eqref{eq:opt_problem} is not the main focus of this paper. We assume that there exists a method to solve the problem posed in \eqref{eq:opt_problem}, and instead focus on analyzing the training time required to obtain a good estimate of the popularity profile that results in an offloading loss that is within $\epsilon$ of the optimal offloading loss. Interestingly, although the problem in \eqref{eq:opt_problem} is non-convex, since it is separable a bound on the duality gap can be obtained with respect to the solution derived using the Karush-Kuhn-Tucker conditions.

In practice, the popularity profile $\mathcal{P}$ is generally unknown and has to be estimated. Denoting the estimated popularity profile by $\hat{\mathcal{P}} \triangleq \{\hat{p}_1,\ldots,\hat{p}_N\}$, and the corresponding offloading loss by ${\mathcal{T}}(\Pi,\hat{\mathcal{P}})$, \eqref{eq:opt_problem} becomes
\begin{eqnarray}
\min_{\Pi \succeq 0} ~~ {\mathcal{T}}(\Pi,\hat{\mathcal{P}}) \label{eq:opt_problem_emperical} \\ \nonumber
\text{subject to} \sum_{i=1}^N \pi_i = 1,
\end{eqnarray}
with $\pi_i \geq 0$, for $i=1,\ldots,N$. Naturally, the solution to \eqref{eq:opt_problem_emperical} differs from that of the original problem \eqref{eq:opt_problem}. Let $\Pi^*$ and $\hat{\Pi}^*$ denote the optimal solutions to the problems in \eqref{eq:opt_problem} and \eqref{eq:opt_problem_emperical}, respectively, and let the throughput achieved using $\hat{\Pi}^*$ be denoted $\hat{\mathcal{T}}^* \triangleq \mathcal{T}(\hat{\Pi}^*,{\hat{\mathcal{P}}})$. \textbf{The central theme of this paper is the analysis of the offloading loss difference, {\ie}, $\hat{\mathcal{T}}^* - \mathcal{T}^*$, where $\mathcal{T}^* \triangleq \mathcal{T}(\Pi^*, \mathcal{P})$ is the minimum offloading loss incurred with perfect knowledge of the popularity profile $\mathcal{P}$.} Theorems \ref{thm:time_complexity_centralized} - \ref{thm:pac_param_pop_distr} are devoted to this analysis.

\section{Estimating the popularity profile} \label{sec:estimate_popularity}
In this section, we present two methods for estimating the popularity profile and provide the corresponding training time analyses. The efficiency of the estimate $\hat{\mathcal{P}}$ of the popularity profile depends on the number of available data samples, which in turn is related to the number of requests made by the users. We first obtain an expression for the estimate of the popularity profile. We then study, in \secref{subsec:waitingtime_lowerbound}, the minimum training time in obtaining the samples to achieve a desired estimation accuracy $\epsilon > 0$. Finally, in \secref{subsec:transfer_learning}, we employ the TL-based approach to improve the bound on the training time. We begin with the definition of the request model.

\defn (\emph{Request Model}) Each user requests a file $f \in \mathcal{F}$ at a random time $t \in [0,\infty]$ following an independent Poisson arrival process with density $\lambda_r>0$.

For notational convenience, the same density is assumed across all the users. The following centralized scheme is used where the BS collects the requests from all the users in its coverage area in a time interval $[0,\tau]$ to estimate the popularity profile of the requested files: Let the number of users in the coverage area of BS $z \in \Phi_b$ of radius $R>0$ be $n_R$, which is distributed according to a PPP with density $\lambda_u$. Let the number of requests made by the user at the location $x \in \{\Phi_u \bigcap \mathbb{B}(0,R)\}$ in the time interval $[0,\tau]$ be $k_x$, where $\mathbb{B}(0,R)$ is a two-dimensional ball of radius $R$ centered at $0$. We assume that requests across the users are known at the BS. The requests from the user $x$ is denoted $\mathcal{X}_x \triangleq \{X_x^{(1)},\ldots,X_x^{(k_x)}\}$, where $X_x^{(l)} \in \{1,\ldots,N\}$ denotes the indices of the files in $\mathcal{F}$, $l=0,\ldots,k_x$. After receiving $\mathcal{X}_x$, $x \in \{\Phi_u \bigcap \mathbb{B}(0,R)\}$, in the time interval $[0,\tau]$, the BS computes an estimate of the popularity profile as follows:
\begin{eqnarray}
\hat{p}_i = \frac{1}{\sum_{x \in \{\mathbb{B}(0,R) \bigcap \Phi_u\}}  k_x} {\sum_{x \in \left\{\mathbb{B}(0,R) \bigcap \Phi_u\right\} }\sum_{l=0}^{k_x} \mathbf{1}\{X_x^{(l)} = i\}}, \!\!\!\!
\label{eq:estimation_popularity}
\end{eqnarray}
$i=1,\ldots,N$. Given the number $n_R$ of users in the coverage area of the BS, the sum $\sum_{x \in \{\mathbb{B}(0,R) \bigcap \Phi_u\} } k_x$ is a PPP with density $n_R \lambda_r$. Also, $\mathbb{E} \left\{\hat{p}_i  |  \abs{\{\Phi_u \bigcap \mathbb{B}(0,R)\}} = n_R\right\} = p_i$, which leads us to conclude that $\hat{p}_i$ is an unbiased estimator. The estimated popularity profile $\hat{p}_i$ given by \eqref{eq:estimation_popularity} is shared with every SBS in the coverage area of the BS, and is then used in \eqref{eq:opt_problem_emperical} to find the optimal caching probability.

The proposed estimator can be improved by using samples from other related domains, for example, a social network. The term ``target domain'' is used when samples are obtained only from users in the coverage area of the BS. In the next subsection we derive the minimum training time $\tau$, corresponding to the estimator in \eqref{eq:estimation_popularity}, required to achieve the desired estimation accuracy $\epsilon > 0$.

\subsection{A lower bound on the training time $\tau$} \label{subsec:waitingtime_lowerbound}
\begin{thrm} \label{thm:lower_bound_nonparam_ntl}
For any $\epsilon>0$, with a probability of at least $1-\delta$, a throughput of $\hat{\mathcal{T}}^* \leq \mathcal{T}^* + \epsilon$ can be achieved using the estimate in \eqref{eq:estimation_popularity} provided
\begin{eqnarray} \label{eq:sourceonly_waitingtime}
\tau \geq \left\{ \begin{array}{cc} \left\{\frac{1}{\lambda_r g^*} \log \left(\frac{1}{1- \frac{1}{\lambda_u \pi R^2} \log \frac{2 N}{\delta}}\right)\right\}^+  &\text{if } \lambda_u > \mathcal{L},\\
\infty  &\text{otherwise},
\end{array}
\right.
\end{eqnarray}
where $\{x\}^+ \triangleq \max\{x,0\}$, $g^* \triangleq (1- \exp\{-2 \bar{\epsilon}^2\})$, $\mathcal{L}\triangleq \frac{1}{\pi R^2} \log \left(\frac{2 N}{\delta}\right)$ and
\begin{eqnarray}
\bar{\epsilon} \triangleq \frac{R_0 \epsilon}{2 B \sup_{\Pi} \sum_{i=1}^N g(\pi_i)},
\end{eqnarray}
with $g(\pi_i)\triangleq \exp\{-\lambda_s \pi \gamma^2 \left[1 - (1-\pi_i)^M \right]\}$.
\label{thm:time_complexity_centralized}
\end{thrm}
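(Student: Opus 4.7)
The plan is to reduce the excess-loss statement $\hat{\mathcal{T}}^* \leq \mathcal{T}^* + \epsilon$ to a uniform deviation bound on the empirical popularity profile, and then push a Hoeffding concentration inequality through the two layers of Poisson randomness (users in the disc and requests per user) to obtain the stated condition on $\tau$. Concretely, I will first establish a sensitivity reduction: using optimality of $\Pi^*$ for $\mathcal{T}(\cdot,\mathcal{P})$ and of $\hat{\Pi}^*$ for $\mathcal{T}(\cdot,\hat{\mathcal{P}})$, the standard empirical-risk-minimization decomposition gives $\hat{\mathcal{T}}^* - \mathcal{T}^* \leq 2\sup_{\Pi}|\mathcal{T}(\Pi,\hat{\mathcal{P}}) - \mathcal{T}(\Pi,\mathcal{P})|$, and substituting the closed form from \thrmref{thm_mean_throughput} together with $|\sum_i g(\pi_i)(\hat p_i - p_i)| \leq (\sum_i g(\pi_i))\max_j |\hat p_j - p_j|$ yields $\hat{\mathcal{T}}^* - \mathcal{T}^* \leq (2B/R_0)\,\sup_\Pi\sum_i g(\pi_i)\,\max_j|\hat p_j - p_j|$. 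Hence it suffices to guarantee $\max_j|\hat p_j - p_j|\leq \bar\epsilon$ with probability at least $1-\delta$, which matches exactly how $\bar\epsilon$ is defined in the theorem.

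Second, I will peel off the sample-level randomness. Let $K \triangleq \sum_{x\in\Phi_u \cap \mathbb{B}(0,R)} k_x$ denote the total number of requests collected in $[0,\tau]$. Conditional on $K=k$, the $k$ requests are i.i.d.\ draws from $\mathcal{P}$, so $k\hat p_i$ is $\text{Binomial}(k,p_i)$ and Hoeffding's inequality gives $P(|\hat p_i - p_i|>\bar\epsilon\mid K=k)\leq 2\exp(-2k\bar\epsilon^2)$. A union bound over $i=1,\dots,N$ yields $P(\max_i|\hat p_i - p_i|>\bar\epsilon\mid K)\leq 2N\exp(-2K\bar\epsilon^2)$ almost surely, so after de-conditioning it remains to choose $\tau$ so that $2N\,\mathbb{E}\exp(-2\bar\epsilon^2 K)\leq \delta$.

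Third, I will compute the moment generating function of $K$ via the two Poisson layers. Conditional on $n_R$ users being present in the disc, $K\sim\text{Poisson}(n_R\lambda_r\tau)$, since it is the aggregate of $n_R$ independent $\text{Poisson}(\lambda_r\tau)$ counts. The Poisson-MGF identity $\mathbb{E}e^{-sZ}=\exp\{\mu(e^{-s}-1)\}$ for $Z\sim\text{Poisson}(\mu)$ gives $\mathbb{E}[e^{-2\bar\epsilon^2 K}\mid n_R] = \exp(-n_R\lambda_r\tau g^*)$, and a second application with $n_R\sim\text{Poisson}(\lambda_u\pi R^2)$ yields $\mathbb{E}\,e^{-2\bar\epsilon^2 K} = \exp\{-\lambda_u\pi R^2(1-e^{-\lambda_r\tau g^*})\}$. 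Requiring this to be at most $\delta/(2N)$ and inverting produces $\tau \geq \frac{1}{\lambda_r g^*}\log\frac{1}{1-\frac{1}{\lambda_u\pi R^2}\log(2N/\delta)}$, matching \eqref{eq:sourceonly_waitingtime}; the threshold $\lambda_u>\mathcal{L}$ emerges because the logarithm is finite only when $1-\log(2N/\delta)/(\lambda_u\pi R^2)>0$, and the $\{\cdot\}^+$ absorbs the trivial case where the bound is already vacuous at $\tau=0$.

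The main obstacle I anticipate is the MGF computation through this compound-Poisson structure: one must argue carefully that the aggregate request count given $n_R$ is really $\text{Poisson}(n_R\lambda_r\tau)$ (so that de-conditioning requires no further information about the users' positions) and then track the nested exponentials without arithmetic slips. The remaining ingredients---Hoeffding, the union bound, and the sensitivity reduction via \thrmref{thm_mean_throughput}---are routine once this MGF step is secured.
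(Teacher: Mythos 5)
Your proposal is correct and follows essentially the same route as the paper's own proof: the empirical-risk-minimization decomposition $\hat{\mathcal{T}}^* - \mathcal{T}^* \leq 2\sup_{\Pi}\abs{\mathcal{T}(\Pi,\hat{\mathcal{P}}) - \mathcal{T}(\Pi,\mathcal{P})}$, the reduction to $\max_i\abs{\hat{p}_i - p_i} \leq \bar{\epsilon}$ via Hoeffding plus a union bound, and the evaluation of $\mathbb{E}\exp\{-2\bar{\epsilon}^2 n_p\}$ through the two nested Poisson layers to obtain $2N\exp\{-\lambda_u \pi R^2(1-e^{-\lambda_r \tau g^*})\} \leq \delta$. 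Your explicit MGF formulation of the compound-Poisson step is a slightly cleaner presentation of the same series computation the paper carries out term by term.
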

\begin{proof}
See Appendix \ref{app:time_complexity_centralized}.
\end{proof}

To achieve a finite training time that results in an estimation accuracy $\epsilon>0$, the user density $\lambda_u$ has to be greater than a threshold. Further insights into \eqref{eq:sourceonly_waitingtime} are obtained by making the following approximation: $1-x\leq e^{-x}$ for all $x\geq0$. This is combined with $\sup_{\Pi: \Pi \succeq 0, \mathbf{1}^T\Pi = 1} \sum_{i=1}^N g(\pi_i) \leq N$
yielding the following lower bound on the training time $\tau$:
\begin{eqnarray}
\tau \geq  \frac{2 B^2}{\pi R^2 \lambda_u \lambda_r R_0^2 \epsilon^2} N^2 \log\left(\frac{2N}{\delta}\right).
\label{eq:source_domain_only_insight_waiting_time}
\end{eqnarray}
The lower bound \eqref{eq:source_domain_only_insight_waiting_time} enables us to make the following observations:
\begin{enumerate}[(i)]
\item The training time $\tau$  to achieve an $\epsilon$-offloading loss difference scales as $N^2$,
\item $\tau$ is inversely proportional to ($\lambda_u$, $\lambda_r$), and
\item as the coverage radius increases, the delay decreases as $1/R^2$, and
\item as the data-file size $B$ increases, the training time scales as $B^2$.
\end{enumerate}

The bound in \eqref{eq:source_domain_only_insight_waiting_time} is a lower bound on the training time per request per user, since the offloading loss is derived for a given request per user. There are on an average $\lambda_r$ requests per unit time per user. Thus, to obtain the training time per user, the offloading loss has to be multiplied by $\lambda_r$. This amounts to replacing $\epsilon$ by $\epsilon/\lambda_r$. Therefore, \eqref{eq:source_domain_only_insight_waiting_time} becomes
\begin{eqnarray} \label{eq:source_domain_only_insight_waiting_time_per_user}
\tau \geq  \frac{2 B^2 \lambda_r}{\pi R^2 \lambda_u R_0^2 \epsilon^2} N^2 \log\left(\frac{2N}{\delta}\right).
\end{eqnarray}
It is seen that the training time scales linearly with $\lambda_r$. Although the training time per user per request tends to zero as $\lambda_r \rightarrow \infty$, the training time per user tends to $\infty$. This is because the number of requests per unit time approaches $\infty$, and thus, a small fraction of errors results in an infinite difference in offloading loss leading to an infinite training time. With the increasing demand to provide higher quality of service for the end user, the question of whether it is possible to improve ({\ie} decrease) the training time $\tau$ to achieve the desired estimation accuracy $\epsilon$ deserves attention. In the next subsection we show that the lower bound on the training time can indeed be improved by employing a TL-based approach.

\subsection{Transfer learning to improve the training time} \label{subsec:transfer_learning}
In practice, the minimum training time required to achieve an estimation accuracy $\epsilon > 0$ can be expected to be very large. An approach to overcome this drawback is to utilize the knowledge obtained from users' interactions with a social community (termed the ``source domain''). Specifically, by cleverly combining samples from the source domain and users' request pattern (target domain), one can potentially reduce the training time. In fact, the estimation accuracy is indicative of the dependence between the source and target domains. These techniques are commonly referred to as TL-based approaches, and have implications on the training time to achieve a given estimation accuracy. TL-based approaches were also employed in \cite{Bastug2014a} and \cite{Bacstuug2015} to negotiate over-fitting problems in estimating the content popularity profile matrix. However, unlike in \cite{Bastug2014a} and \cite{Bacstuug2015}, in this paper we are interested in deriving the minimum training time to achieve a desired performance accuracy. Furthermore, the model we consider is quite different from those considered in \cite{Bastug2014a} and \cite{Bacstuug2015}.

The TL-based approach considered here comprises two sources, namely, the source domain and target domain, from which the samples are acquired. An estimate of the popularity profile is obtained in a stepwise manner as follows:
\begin{enumerate}[(i)]
\item Using target domain samples, the following parameter is computed at the BS:
    \begin{eqnarray}
    \hat{S}_i^{(tar)} \triangleq \!\!\!\!  {\sum_{x \in \mathbb{B}(0,R) \bigcap \Phi_u } \sum_{l=0}^{k_x} \mathbf{1}\{X_x^{(l)} = i\}},
    i=1,\ldots,N.
    \label{eq:sum_estimate_target}
    \end{eqnarray}
Recall that $k_x$ is the number of requests made by the user at the location $x$. The corresponding $l^{\text{th}}$ request by the user at the location $x$ in the time interval $[0,\tau]$ is denoted $X_x^{(l)}$, $l=1,2,\ldots,k_x$.
\item The source domain samples $\mathcal{X}^s \triangleq \{X_1^{s},\ldots,X_m^{s}\}$ are drawn {\iid} from a distribution $\mathcal{Q}$, where $X_l^s = i~(i=1,\ldots,N$) denotes that the user corresponding to the $l^{\text{th}}$ sample has requested the file $f_i$. The nature of the distribution will be made precise in \propref{prop:TL_cache}. Using this, the BS computes
    \begin{eqnarray}
    \hat{S}_{i}^{s} \triangleq \sum_{k=1}^m \mathbf{1}\{X_k^{s} = i\},~i=1,2,\ldots,N.
    \label{eq:source_domain_sum}
    \end{eqnarray}

\item The BS uses \eqref{eq:sum_estimate_target} and \eqref{eq:source_domain_sum} to compute an estimate of $\hat{p}_i^{(tl)}$ (the superscript $tl$ indicates transfer learning) given by
    \begin{eqnarray} \label{eq:tl_flexible_estimate}
    \hat{p}_i^{(tl)}= \frac{\hat{S}_i^{(tar)} +  \hat{S}_{i}^{s}}{{\sum_{x \in \{\mathbb{B}(0,R) \bigcap \Phi_u\}}  k_x} + m}.
    \label{eq:estimation_TL}
    \end{eqnarray}
\end{enumerate}
Using the estimate given by \eqref{eq:estimation_TL}, a lower bound on the training time is obtained as stated in the next theorem.

\begin{thrm} \label{thm:lower_bound_nonparam_tl}
Let  $g(\pi_i) \triangleq \exp\{-\lambda_s \pi \gamma^2 \left[1 - (1-\pi_i)^M \right]\}$. Then, for any accuracy
\begin{eqnarray} \label{eq:epsilon_error_bound}
\epsilon > \frac{2 B  \sup_\Pi \left\{\sum_{i=1}^N g(\pi_i)\right\} } {R_0} \norm{\mathcal{P} - \mathcal{Q}}_{\infty},
\end{eqnarray}
with a probability of at least $1-\delta$, a throughput of $\hat{\mathcal{T}}^* \leq \mathcal{T}^* + \epsilon$ can be achieved using the estimate in \eqref{eq:estimation_TL} provided the training time $\tau$ satisfies the following condition:
\begin{eqnarray}
\tau \geq \left\{ \begin{array}{cc} \left\{\frac{1}{\lambda_r (1-e^{-2 \epsilon_{pq}^2})} \log \left(\frac{1}{1 - \Lambda} \right)\right\}^+, &\text{if } \lambda_u > \rho,\\
\infty,  &\text{otherwise},
\end{array}
\right.
\end{eqnarray}
where $\rho \triangleq \frac{1}{\pi R^2} \left(\log \frac{2N}{\delta} - 2 \epsilon_{pq}^2 m\right)$, $\epsilon_{pq}\triangleq \bar \epsilon - \norm{\mathcal{P} - \mathcal{Q}}_{\infty}$, $\Lambda \triangleq \frac{1}{\lambda_u \pi R^2} \left(\log \frac{2N}{\delta} - 2 \epsilon_{pq}^2 m\right)$, and $\bar \epsilon \triangleq \frac{R_0 {\epsilon}}{2 B \sup_\Pi \left\{\sum_{i=1}^N g(\pi_i)\right\} }$.
\label{thm:time_complexity_centralized_TL}
\end{thrm}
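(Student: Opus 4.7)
The plan is to mirror the derivation of \thrmref{thm:time_complexity_centralized} but to accommodate the bias that the source-domain samples introduce. First, I would recycle the (presumably standard) Lipschitz-style step from the proof of \thrmref{thm:time_complexity_centralized}: using the explicit formula in \thrmref{thm_mean_throughput},
\begin{equation*}
\abs{\mathcal{T}(\Pi,\hat{\mathcal{P}}) - \mathcal{T}(\Pi,\mathcal{P})} \le \frac{B}{R_0}\sum_{i=1}^N g(\pi_i)\abs{\hat p_i - p_i},
\end{equation*}
and the usual two-step comparison $\hat{\mathcal{T}}^*-\mathcal{T}^* \le 2\sup_{\Pi}\abs{\mathcal{T}(\Pi,\hat{\mathcal{P}})-\mathcal{T}(\Pi,\mathcal{P})}$. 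This reduces the theorem to showing that, with probability at least $1-\delta$,
\begin{equation*}
\max_{1\le i\le N}\bigl|\hat p_i^{(tl)} - p_i\bigr| \;\le\; \bar\epsilon \;=\; \frac{R_0\epsilon}{2B\sup_{\Pi}\sum_i g(\pi_i)}.
\end{equation*}

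Next I would confront the novel part: $\hat p_i^{(tl)}$ is \emph{biased}. Conditioning on $K_{\mathrm{tar}}\triangleq\sum_{x\in\Phi_u\cap\mathbb{B}(0,R)}k_x$ and the number of source samples $m$, the conditional mean is the convex combination $(K_{\mathrm{tar}}p_i+m q_i)/(K_{\mathrm{tar}}+m)$, so the bias is at most $\norm{\mathcal{P}-\mathcal{Q}}_\infty$. Splitting by the triangle inequality,
\begin{equation*}
\bigl|\hat p_i^{(tl)} - p_i\bigr| \;\le\; \bigl|\hat p_i^{(tl)} - \mathbb{E}[\hat p_i^{(tl)}\mid K_{\mathrm{tar}}]\bigr| + \norm{\mathcal{P}-\mathcal{Q}}_\infty,
\end{equation*}
so it suffices to force the stochastic deviation to be at most $\epsilon_{pq}=\bar\epsilon-\norm{\mathcal{P}-\mathcal{Q}}_\infty$. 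This is where the hypothesis \eqref{eq:epsilon_error_bound} enters: it is exactly the condition $\epsilon_{pq}>0$, without which the argument is vacuous.

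Now, conditional on $K_{\mathrm{tar}}$, the numerator of $\hat p_i^{(tl)}$ is a sum of $K_{\mathrm{tar}}+m$ independent $\{0,1\}$-valued indicators (target ones with mean $p_i$, source ones with mean $q_i$); Hoeffding's inequality yields
\begin{equation*}
\Pr\!\bigl\{|\hat p_i^{(tl)} - \mathbb{E}\hat p_i^{(tl)}|>\epsilon_{pq}\bigm| K_{\mathrm{tar}}\bigr\} \le 2\exp\{-2(K_{\mathrm{tar}}+m)\epsilon_{pq}^2\}.
\end{equation*}
A union bound over the $N$ files, combined with requiring the right-hand side to be at most $\delta$, translates into the demand $K_{\mathrm{tar}}\ge \tfrac{1}{2\epsilon_{pq}^2}\log\tfrac{2N}{\delta}-m$. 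Notice that the $m$ source samples produce precisely the ``credit'' $-2\epsilon_{pq}^2 m$ appearing inside the logarithms in $\rho$ and $\Lambda$.

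Finally, I would translate the requirement on $K_{\mathrm{tar}}$ into a requirement on $\tau$ by exploiting the Poisson structure twice. Given the user count $n_R$, $K_{\mathrm{tar}}$ is Poisson with mean $n_R\lambda_r\tau$, and $n_R$ itself is Poisson with mean $\lambda_u\pi R^2$. Using the standard Poisson/Chernoff lower-tail bound $\Pr(\mathrm{Poi}(\mu)=0)=e^{-\mu}\le\delta'$ (as is presumably done in \thrmref{thm:time_complexity_centralized}) for the two layers of randomness, the event $\{K_{\mathrm{tar}}\ge(1-e^{-\lambda_r\tau})n_R\}$ followed by $\{n_R\ge \lambda_u\pi R^2 - \log(2N/\delta)\}$ (or an analogous Chernoff bound) yields the two-layer condition. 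Solving for $\tau$ gives the stated $\tau\ge\tfrac{1}{\lambda_r(1-e^{-2\epsilon_{pq}^2})}\log\tfrac{1}{1-\Lambda}$, while positivity of the inner term $1-\Lambda$ gives the density threshold $\lambda_u>\rho$. The main obstacle I anticipate is the bias/variance splitting with the correct constants so that $\epsilon_{pq}$ (rather than $\bar\epsilon$) appears in the exponent and $m$ appears as an additive credit; the Poisson bookkeeping is routine given that the $m=0$ specialization must reduce to \thrmref{thm:time_complexity_centralized}, which provides a useful sanity check.
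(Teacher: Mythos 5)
Your handling of the genuinely new ingredient of this theorem is exactly the paper's: you condition to get $\mathbb{E}[\hat p_i^{(tl)}\mid n_p]=\frac{n_p}{n_p+m}p_i+\frac{m}{n_p+m}q_i$, bound the bias by $\norm{\mathcal{P}-\mathcal{Q}}_\infty$, identify \eqref{eq:epsilon_error_bound} as precisely the condition $\epsilon_{pq}=\bar\epsilon-\norm{\mathcal{P}-\mathcal{Q}}_\infty>0$, and apply Hoeffding to the $n_p+m$ indicators so that the source samples contribute the additive credit $-2\epsilon_{pq}^2m$. Up to the union bound over the $N$ files, this is the paper's argument.

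The gap is in the last step, which you dismiss as routine Poisson bookkeeping but actually mis-describe. The paper does \emph{not} convert the requirement $K_{\mathrm{tar}}\ge\frac{1}{2\epsilon_{pq}^2}\log\frac{2N}{\delta}-m$ into a requirement on $\tau$ via Chernoff lower-tail bounds on $K_{\mathrm{tar}}$ and $n_R$; that route forces you to split $\delta$ across a "deviation" event and a "too few samples" event and produces a bound of a different shape than the one stated. Instead, the paper takes the expectation of the Hoeffding bound itself over the two layers of Poisson randomness, using the exact identity $\mathbb{E}[e^{-sn_p}\mid n_R]=\exp\{-\lambda_r n_R\tau(1-e^{-s})\}$ with $s=2\epsilon_{pq}^2$, and then again $\mathbb{E}_{n_R}[e^{-cn_R}]=\exp\{-\lambda_u\pi R^2(1-e^{-c})\}$. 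This exact computation is what produces the factor $1-e^{-2\epsilon_{pq}^2}$ multiplying $\lambda_r\tau$, the $\log\frac{1}{1-\Lambda}$ form, and the threshold $\lambda_u>\rho$ upon setting $2Ne^{-2\epsilon_{pq}^2m}\exp\{-\lambda_u\pi R^2(1-e^{-\lambda_r\tau(1-e^{-2\epsilon_{pq}^2})})\}\le\delta$. Also, the events you write down, such as $\{K_{\mathrm{tar}}\ge(1-e^{-\lambda_r\tau})n_R\}$ justified by $\Pr(\mathrm{Poi}(\mu)=0)=e^{-\mu}$, do not follow from that identity and would not hold with the probability you need; the same exact-expectation computation (not a tail bound) is what is used in the proof of \thrmref{thm:time_complexity_centralized} as well, so your sanity check at $m=0$ would have flagged this. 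The fix is mechanical once you replace the tail-bound step with the probability-generating-function computation, but as written the final step would not yield the stated constants.
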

\begin{proof}
See Appendix \ref{app:TL_time_complexity}.
\end{proof}

From \thrmref{thm:time_complexity_centralized_TL}, we see that under suitable conditions the TL-based approach performs better than the source domain sample-based agnostic approach. The following inferences are drawn:
\begin{enumerate}[(1)]
\item The minimum user density to achieve a finite delay is reduced by a positive offset $2 \epsilon_{pq}^2 m$. In fact, for $m > \frac{\log{\left(\frac{2N}{\delta}\right)}}{2(\bar \epsilon - \norm{\mathcal{P}-\mathcal{Q}}_\infty)^2}$, a finite delay can be achieved for all user densities which provides a significant advantage.

\item The finite delay achieved is smaller compared to the source domain sample-based agnostic approach for large enough numbers of source samples, and the distributions are ``close.'' This is made precise in the following proposition, and a detailed discussion is provided in Section \ref{sec:sims}.
    \begin{prop}
    For any $\epsilon>0$ and $\delta \in [0,1]$, the TL-based approach performs better than the source sample-based agnostic approach provided the number $m$ of source samples satisfies
     $ m \geq \frac{1}{2 \epsilon_{pq}^2} \left[\log\left(\frac{2N}{\delta}\right) - F\right]^+$,
 and the distributions satisfy the following condition:
 \begin{eqnarray}
\norm{\mathcal{P}-\mathcal{Q}}_{\infty} < \frac{\epsilon R_0}{2 B \lambda_u \pi \gamma^2 N},
 \label{eq:diff_distribution}
\end{eqnarray}
 where $F \triangleq \lambda_u \pi R^2 \left(1 - \exp\left\{\frac{1 - e^{-2 {\epsilon}_{pq}^2}} {1 - e^{-2 \bar{\epsilon}^2}} \log \left(1 - \mathcal{L}\right) \right\}\right)$ and $\mathcal{L} \triangleq \frac{1}{\lambda_u \pi R^2} \log\left(\frac{2N}{\delta}\right)$.
\label{prop:TL_cache}
\end{prop}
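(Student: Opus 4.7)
The plan is to directly compare the two training-time bounds supplied by Theorems~\ref{thm:time_complexity_centralized} and~\ref{thm:time_complexity_centralized_TL} and isolate the range of $m$ for which the TL bound is the smaller of the two. First I would verify that the distance condition~\eqref{eq:diff_distribution} implies the precondition~\eqref{eq:epsilon_error_bound} of Theorem~\ref{thm:time_complexity_centralized_TL}: applying an elementary upper bound on $\sup_\Pi \sum_{i=1}^N g(\pi_i)$ (using $g(\pi_i)\le 1$ together with $1-e^{-x}\le x$ on the exponent of $g(\pi_i)$, which is itself bounded by $\lambda_s\pi\gamma^2$), the right-hand side of~\eqref{eq:diff_distribution} is seen to be at most $\bar\epsilon$. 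Hence \eqref{eq:diff_distribution} forces $\norm{\mathcal{P}-\mathcal{Q}}_\infty<\bar\epsilon$, which is exactly the requirement that $\epsilon_{pq}=\bar\epsilon-\norm{\mathcal{P}-\mathcal{Q}}_\infty>0$ and equivalently that~\eqref{eq:epsilon_error_bound} holds, so the TL bound is applicable and all logarithms below are well-defined.

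Next, assume that the source-only bound is finite, i.e.\ $\lambda_u>\mathcal{L}$. Writing the two bounds as
\[
\tau_{\text{src}}=\frac{\log[1/(1-\mathcal{L})]}{\lambda_r(1-e^{-2\bar\epsilon^2})},\qquad
\tau_{\text{tl}}=\frac{\log[1/(1-\Lambda)]}{\lambda_r(1-e^{-2\epsilon_{pq}^2})},
\]
the inequality $\tau_{\text{tl}}\le\tau_{\text{src}}$ reduces, after clearing the factors of $\lambda_r$ and exponentiating, to
\[
1-\Lambda\ \ge\ (1-\mathcal{L})^{(1-e^{-2\epsilon_{pq}^2})/(1-e^{-2\bar\epsilon^2})}.
\]
This step relies only on the monotonicity of $\log(\cdot)$ and on the fact that the exponent $C\triangleq (1-e^{-2\epsilon_{pq}^2})/(1-e^{-2\bar\epsilon^2})$ lies in $(0,1]$ because $0<\epsilon_{pq}\le\bar\epsilon$. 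Substituting the definition $\Lambda=(\lambda_u\pi R^2)^{-1}[\log(2N/\delta)-2\epsilon_{pq}^2 m]$ and solving for $m$ gives precisely the stated bound $m\ge(2\epsilon_{pq}^2)^{-1}[\log(2N/\delta)-F]^+$ with $F=\lambda_u\pi R^2(1-(1-\mathcal{L})^C)$.

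In the complementary regime $\lambda_u\le\mathcal{L}$ one has $\tau_{\text{src}}=\infty$, so the conclusion $\tau_{\text{tl}}\le\tau_{\text{src}}$ is automatic provided $\tau_{\text{tl}}$ is finite, i.e.\ $\Lambda<1$. A short calculation shows that this is again implied by the same lower bound on $m$, because $F\le\lambda_u\pi R^2$ and hence the inequality $m\ge(2\epsilon_{pq}^2)^{-1}[\log(2N/\delta)-F]^+$ forces $\log(2N/\delta)-2\epsilon_{pq}^2 m<\lambda_u\pi R^2$. The $[\,\cdot\,]^+$ truncation in the proposition then accommodates the degenerate case $\log(2N/\delta)\le F$, in which no source-domain samples are needed at all.

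The main obstacle is the exponentiation step in the second paragraph, where one must track the direction of the inequality and correctly identify the exponent $C$; all other steps are either direct substitution of the definitions from Theorems~\ref{thm:time_complexity_centralized} and~\ref{thm:time_complexity_centralized_TL} or the routine bounding of $\sup_\Pi \sum_i g(\pi_i)$ used to pass from~\eqref{eq:diff_distribution} to~\eqref{eq:epsilon_error_bound}.
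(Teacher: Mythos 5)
The paper itself supplies no proof of \propref{prop:TL_cache} (there is no corresponding appendix), so the only available benchmark is the implied argument: compare the two lower bounds from Theorems \ref{thm:time_complexity_centralized} and \ref{thm:time_complexity_centralized_TL} directly. That is exactly what you do, and the core of your derivation is correct: writing both bounds with the common factor $\lambda_r$, reducing $\tau_{\text{tl}}\leq\tau_{\text{src}}$ to $1-\Lambda\geq(1-\mathcal{L})^{C}$ with $C=(1-e^{-2\epsilon_{pq}^2})/(1-e^{-2\bar\epsilon^2})\in(0,1]$, and substituting the definition of $\Lambda$ yields precisely $m\geq\frac{1}{2\epsilon_{pq}^2}[\log(2N/\delta)-F]^{+}$ with $F=\lambda_u\pi R^2(1-(1-\mathcal{L})^{C})$. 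The sign-tracking through the exponentiation is handled correctly, and your observation that $F<\lambda_u\pi R^2$ forces $\Lambda<1$ (finiteness of the TL bound) is a nice touch, although note that when $\mathcal{L}\geq1$ the quantity $F$ is itself undefined, so the proposition implicitly lives in the regime $\lambda_u\pi R^2>\log(2N/\delta)$.

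The one genuine gap is in your first paragraph, where you claim that \eqref{eq:diff_distribution} implies $\norm{\mathcal{P}-\mathcal{Q}}_{\infty}<\bar\epsilon$ (equivalently $\epsilon_{pq}>0$, the precondition of \thrmref{thm:time_complexity_centralized_TL}) via elementary bounds on $\sup_\Pi\sum_i g(\pi_i)$. The chain you sketch does not close: $g(\pi_i)\leq1$ gives $\sup_\Pi\sum_i g(\pi_i)\leq N$ and hence $\bar\epsilon\geq\frac{\epsilon R_0}{2BN}$, but to conclude that the right-hand side of \eqref{eq:diff_distribution} is at most $\bar\epsilon$ you would then need $\lambda_u\pi\gamma^2\geq1$, which is nowhere assumed (it happens to hold for the paper's numerical parameters, and the numerical section in fact uses the threshold $\frac{\epsilon R_0}{2BN}$ without the $\lambda_u\pi\gamma^2$ factor, suggesting the factor in \eqref{eq:diff_distribution} is spurious or carries a hidden assumption). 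Moreover, the inequality $1-e^{-x}\leq x$ applied to the exponent of $g(\pi_i)$ produces a \emph{lower} bound on $g(\pi_i)$, i.e., an inequality in the wrong direction for upper-bounding $\sum_i g(\pi_i)$. You should either state $\lambda_u\pi\gamma^2\geq1$ (or $\lambda_s\pi\gamma^2\geq1$, if one reads the constant as a typo for the SBS density that actually appears in $g$) as an explicit hypothesis, or simply replace the condition by $\norm{\mathcal{P}-\mathcal{Q}}_{\infty}<\frac{\epsilon R_0}{2BN}\leq\bar\epsilon$, which is what the rest of your argument actually needs.
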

In fact, \eqref{eq:diff_distribution} provides the guiding principle to decide if the samples drawn from the distribution $\mathcal{Q}$ should be used to estimate the distribution $\mathcal{P}$. In general, the distance between the distributions has to be estimated from the available samples (relative to the distribution on $\mathcal{P}$).
\end{enumerate}

An estimate of the popularity profile can also be obtained by linearly combining its estimates obtained from the source domain and target domain samples. In particular, we have
\begin{equation}
\hat{p}_i = \alpha \hat{p}_i^{(s)} + (1 - \alpha) \hat{p}_i^{(t)},
\label{eq:TL_convex_comb_estimate}
\end{equation}
where $\hat{p}_i^{(s)}$ and $\hat{p}_i^{(t)}$ are the estimates of the popularity profile obtained from the source domain samples and the target domain samples, respectively. The estimates are given by
\begin{eqnarray}
\hat{p}_i^{(t)} &=& \frac{\hat{S}_i^{(tar)}}{{\sum_{u \in \{\mathbb{B}(0,R) \bigcap \Phi_u\}}  k_u}}, \\
\hat{p}_i^{(s)} &=& \frac{\hat{S}_{i}^{s}}{m}.
\end{eqnarray}

Note that, in this case the coefficients are independent of the realization of the network. For the estimate proposed in \eqref{eq:TL_convex_comb_estimate}, we have the following result:
\begin{thrm} \label{thm:conv_comb}
For any accuracy
\begin{eqnarray} \label{eq:epsilon_error_bound}
\epsilon > \frac{2 B  \sup_\Pi \left\{\sum_{i=1}^N g(\pi_i)\right\} } {R_0} \norm{\mathcal{P} - \mathcal{Q}}_{\infty},
\end{eqnarray}
with a probability of at least $1-\delta$, a throughput of $\hat{\mathcal{T}}^* \leq \mathcal{T}^* + \epsilon$ can be achieved using the estimate in \eqref{eq:TL_convex_comb_estimate} provided the training time $\tau$ satisfies the condition specified by \eqref{eq:thrm4cond} at the top of the next page,
\begin{figure*}[!t]
\begin{eqnarray}
\tau \geq \left\{ \begin{array}{cc}  \frac{1}{\lambda_r g_t^*} \log \left[\frac{1}{1 - \frac{1}{\lambda_u \pi R^2 } \left(\log\left(\frac{2N}{\delta}\right) + \log\left\{\frac{1}{1 - \left(\frac{2N}{\delta}\right)\exp\{-2{\bar \omega}^2 m\}}\right\} \right)}\right], &\text{if } \lambda_u > \rho_\text{thres},\\
\infty,  &\text{otherwise},
\end{array}
\right. \label{eq:thrm4cond}
\end{eqnarray}
\hrulefill
\end{figure*}
where $$\rho_\text{thresh} \triangleq \frac{1}{\pi R^2} \left(\log\left(\frac{2N}{\delta}\right) + \log\left\{\frac{1}{1 - \left(\frac{2N}{\delta}\right)\exp\{-2{\bar \omega}^2 m\}}\right\} \right),$$
$\bar \epsilon \triangleq \frac{R_0 {\epsilon}}{2 B \sup_\Pi \left\{\sum_{i=1}^N g(\pi_i)\right\} }$, $g_t^* \triangleq \left( 1 - \exp\left\{-{2 \eta^2 }\right\}\right)$, $g(\pi_i) \triangleq \exp\{-\lambda_s \pi \gamma^2 \left[1 - (1-\pi_i)^M \right]\}$ and  $\omega \triangleq \frac{\bar{\epsilon} - (1 - \alpha) \eta}{\alpha} > 0$. This is valid for all $0 < \alpha < \min \left\{\frac{\bar \epsilon}{G},1\right\}$ and $0 \leq \eta < \frac{\bar \epsilon - \alpha  {G}}{1-\alpha}$, where $G \triangleq \norm{\mathcal{P}-\mathcal{Q}}_{\infty} + \sqrt{\frac{1}{2m} \log\frac{2N}{\delta}}$.
\label{thm:time_complexity_centralized_TL_convex}
\end{thrm}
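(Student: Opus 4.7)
My plan is to follow the template already used for Theorems~\ref{thm:time_complexity_centralized} and~\ref{thm:time_complexity_centralized_TL}: first reduce the offloading-loss guarantee to a uniform estimation-error guarantee $\norm{\hat{\mathcal{P}}-\mathcal{P}}_\infty \leq \bar\epsilon$, and then control this error by Hoeffding concentration on the source samples combined with the Hoeffding-plus-PPP chain already developed for the target domain. Using the closed-form expression in \thrmref{thm_mean_throughput} and the optimality of $\Pi^*$ and $\hat\Pi^*$, a standard add-and-subtract manipulation gives the Lipschitz-type bound
\[ \hat{\mathcal{T}}^* - \mathcal{T}^* \,\leq\, 2\sup_\Pi |\mathcal{T}(\Pi,\hat{\mathcal{P}}) - \mathcal{T}(\Pi,\mathcal{P})| \,\leq\, \frac{2B}{R_0} \sup_\Pi \sum_{i=1}^N g(\pi_i)\, \norm{\hat{\mathcal{P}}-\mathcal{P}}_\infty, \]
so it suffices to show that $\norm{\hat{\mathcal{P}}-\mathcal{P}}_\infty \leq \bar\epsilon$ with probability at least $1-\delta$.

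For the convex-combination estimator \eqref{eq:TL_convex_comb_estimate}, convexity gives $|\hat p_i - p_i| \leq \alpha |\hat p_i^{(s)} - p_i| + (1-\alpha)|\hat p_i^{(t)} - p_i|$, and I would allocate the $\bar\epsilon$ budget via $\alpha \omega + (1-\alpha)\eta = \bar\epsilon$; this is precisely the definition $\omega = (\bar\epsilon - (1-\alpha)\eta)/\alpha$ in the statement. On the source side, the triangle inequality $|\hat p_i^{(s)} - p_i| \leq |\hat p_i^{(s)} - q_i| + \norm{\mathcal{P}-\mathcal{Q}}_\infty$ together with Hoeffding applied to the $m$ i.i.d.\ Bernoullis $\mathbf{1}\{X_k^s = i\}$ of mean $q_i$, followed by a union bound over $i=1,\ldots,N$, yields $\Pr(\max_i |\hat p_i^{(s)} - p_i| > \omega) \leq 2N \exp(-2\bar\omega^2 m)$ with $\bar\omega := \omega - \norm{\mathcal{P}-\mathcal{Q}}_\infty$. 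The admissible ranges $0 < \alpha < \bar\epsilon/G$ and $0 \leq \eta < (\bar\epsilon - \alpha G)/(1-\alpha)$ in the statement are exactly what force $\omega > G$ and hence $\bar\omega > \sqrt{\log(2N/\delta)/(2m)}$, equivalently $2N\exp(-2\bar\omega^2 m) < \delta$, so that a strictly positive target-failure budget remains. On the target side I would reuse, almost verbatim, the conditioning chain from the proof of \thrmref{thm:time_complexity_centralized}: conditional on $n_R = |\Phi_u \cap \mathbb{B}(0,R)|$ and $K = \sum_x k_x$, the estimator $\hat p_i^{(t)}$ is a Bernoulli empirical mean with mean $p_i$; Hoeffding plus a union bound give conditional failure probability at most $2N\exp(-2\eta^2 K)$, and the Poisson tower $K \mid n_R \sim \text{Poisson}(n_R \lambda_r \tau)$, $n_R \sim \text{Poisson}(\lambda_u \pi R^2)$ is then used to derandomize $K$ and $n_R$, producing a condition on $\tau$ identical in form to the one in \thrmref{thm:time_complexity_centralized} but with $\bar\epsilon$ replaced by $\eta$.

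The last step is the probability allocation that produces the specific expression \eqref{eq:thrm4cond}. I would set $\delta_s := 2N\exp(-2\bar\omega^2 m)$ for the source failure and $\delta_t := \delta - \delta_s$ for the target; the admissibility conditions above ensure $\delta_s < \delta$. Using the identity $\log(2N/\delta_t) = \log(2N/\delta) + \log\{1/(1 - (2N/\delta)\exp(-2\bar\omega^2 m))\}$ and substituting into the \thrmref{thm:time_complexity_centralized}-style bound $\tau \geq \frac{1}{\lambda_r(1 - e^{-2\eta^2})} \log \frac{1}{1 - \frac{1}{\lambda_u \pi R^2}\log(2N/\delta_t)}$ reproduces \eqref{eq:thrm4cond} verbatim, and positivity of the inner logarithm's argument reduces exactly to $\lambda_u > \rho_\text{thresh}$. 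I expect the main obstacle to be the conditioning chain on the target side: keeping the Poisson--Chernoff tails for $n_R$ and for $K \mid n_R$ tight enough that, after the substitution $\bar\epsilon \mapsto \eta$, the resulting bound literally matches the \thrmref{thm:time_complexity_centralized} form, so that the source-failure contribution enters cleanly only through the modification $\delta \mapsto \delta_t$. Once that bookkeeping is set up, the remaining computations are routine.
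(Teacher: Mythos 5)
Your proposal is correct and follows essentially the same route as the paper's proof in Appendix E: reduce the offloading-loss gap to the uniform deviation $\sup_i|\hat p_i - p_i|$, split the convex combination's error into a source part (handled by the triangle inequality through $\mathcal{Q}$, Hoeffding over the $m$ i.i.d.\ samples, and a union bound, giving $2N e^{-2\bar\omega^2 m}$) and a target part (handled by the Hoeffding-plus-Poisson-tower chain from Theorem~\ref{thm:time_complexity_centralized} with $\bar\epsilon$ replaced by $\eta$), and then allocate the failure probability so that the residual target budget is $\delta - 2N e^{-2\bar\omega^2 m}$, which yields \eqref{eq:thrm4cond} and the constraints on $\alpha$ and $\eta$. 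The only cosmetic difference is that the paper phrases the error split as a partition on the event $\{|\hat p_i^{(t)}-p_i|>\eta\}$ rather than as a deterministic budget $\alpha\omega+(1-\alpha)\eta=\bar\epsilon$, but the two are equivalent.
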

\begin{proof}
See \appref{app:estimation_TL_convex}.
\end{proof}

\section{Parametrized Family of Popularity Profile}\label{sec:param_pac_bound}
In the previous sections, no structure was imposed on the popularity profile. In practice, the popularity profile is modeled using a parametric family of distributions such as the Zipf distribution \cite{Llorca2013}, which, with a one-dimensional parameter $\Theta \in \mathbb{R}$, is specified by ${p}_{\Theta,i} = \frac{{1/i^\Theta}}{\sum_{j=1}^N {1/j^\Theta}}, i = 1,2,\ldots,N$. To obtain an estimate of the Zipf distribution it suffices to find the parameter $\Theta$; estimating a single parameter requires fewer samples which can potentially reduce the training time. We now derive bounds on the training time when the popularity profile belongs to a parametric family of distributions. We begin with the following assumption: \\
\textbf{Assumption 1:} Let the family of parametrized popularity distributions be defined by $\mathcal{P} = \{\mathcal{P}_\Theta: \Theta \subseteq [a, b]^d, a < b\}$. Further, for all $\Theta \subseteq [a, b]^d$, $\mathcal{P}_\Theta$ satisfies $\sum_{i=1}^N \norm{{\partial}_\Theta {p}_{\Theta,i}}_2 <  C$, where $C<\infty$ is independent of $N$, and $\partial_\Theta {p}_{\Theta,i} \in \mathbb{R}^d$ denotes the sub-differential of ${p}_{\Theta,i}$. For example, the Zipf distribution ${p}_{\theta,i} = \frac{{1/i^\alpha}}{\sum_{j=1}^N {1/j^\alpha}}, i = 1,2,\ldots,N$ satisfies this property.

Let the true underlying parameter be $\Theta:= \{\Theta_1,\ldots,\Theta_d\}$. Note that $\Theta_j \in [a,b]$ for all $j=1,2,\ldots, d$. Let the BS observe $n_p$ (number of requests) i.i.d samples $(X_{t,1},X_{t,2},\ldots,X_{t,n_p}) \in \mathcal{X}_t^{n_p}$ drawn from the distribution $\mathcal{P}_\Theta$. Also, let $\hat {\Theta}_{n_p}^{(i)} := \left(\hat{\Theta}_{n_p,i,1}, \hat{\Theta}_{n_p,i,2},\ldots,\hat{\Theta}_{n_p,i,d} \right) \in \mathbb{R}^d$, $i=1,2,\ldots,n_p$ denote the estimate of $\Theta$, based on a single observation, i.e., $\hat {\Theta}_{n_p}^{(i)} = f(X_{t,i})$, $i=1,2,\ldots,n_p$, where $f:\mathcal{X}_t \rightarrow [a,b]^d$ is an unbiased estimator of $\Theta$. In the above, $n_p$ denotes the number of requests made by the users corresponding to the BS $z$ in a time interval of $[0,\tau]$. Since $f(\cdot)$ is an unbiased estimator of $\Theta$, we have $\mathbb{E} \left\{ \hat{\Theta}_{n_p}^{(i)} | \Theta\right\} = \Theta$ for all $i=1,2,\ldots, n_p$. The estimate of $\Theta$ using $n_p$ samples is obtained as follows:
\begin{equation} \label{eq:param_est}
\hat{\Theta}_{n_p} = \frac{1}{n_p} \sum_{j=1}^{n_p} \hat{\Theta}_{n_p}^{(j)}.
\end{equation}
Note that $\hat{\Theta}_{n_p} := (\hat{\Theta}_{n_p,1},\hat{\Theta}_{n_p,2},\ldots, \hat{\Theta}_{n_p,d}),$ is also an unbiased estimator of $\Theta$, i.e., $\mathbb{E} \left\{\hat{\Theta}_{n_p} | \Theta, n_p\right\} = \Theta$. The following theorem provides a bound on the time complexity for a family of parameterized popularity profile satisfying \textbf{Assumption 1}.

\begin{thrm}
For the family $\mathcal{P}_\Theta$ satisfying \textbf{Assumption 1}, and given the estimator $\hat{\Theta}_{n_p}$, $\hat{\mathcal{T}}^* \leq \mathcal{T}^* + \epsilon$ for every $\epsilon > 0$ with probability at least $1 - \delta$ if
\begin{equation}
\tau > \frac{1}{\lambda_r (1-e^{-\sigma^2})} \log \left(\frac{1}{1 - \frac{1}{\lambda_u \pi R^2} \log \frac{2 d}{\delta}}\right),
\label{eq:zipf_bound}
\end{equation}
for $\lambda_u > \frac{1}{\pi R^2} \log \frac{2 d}{\delta}$, otherwise $\tau = \infty$, where $\sigma^2 \triangleq \frac{2 \Omega^2}{d C^2 (b-a)^2}$ and $\Omega   \triangleq  \frac{R_0 \epsilon}{2 B}$.
\label{thm:pac_param_pop_distr}
\end{thrm}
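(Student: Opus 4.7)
The plan is to mirror the structure of \thrmref{thm:time_complexity_centralized}, except that the concentration step is applied to the $d$-dimensional parameter estimate $\hat{\Theta}_{n_p}$ instead of to the $N$-dimensional empirical distribution, and then Assumption 1 is used to transport a bound on $\|\hat{\Theta}-\Theta\|_2$ into a bound on the popularity-profile error that actually governs the offloading gap.

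First I would reduce the offloading-loss gap to a parameter-estimation gap. By the same add-and-subtract argument used in \thrmref{thm:time_complexity_centralized}, $\hat{\mathcal{T}}^* - \mathcal{T}^* \leq 2\sup_\Pi |\mathcal{T}(\Pi,\hat{\mathcal{P}}) - \mathcal{T}(\Pi,\mathcal{P})|$, which by the closed form \eqref{eq:mean_througput_expression} equals $\tfrac{2B}{R_0}\sup_\Pi \bigl|\sum_i g(\pi_i)(p_{\hat{\Theta},i} - p_{\Theta,i})\bigr|$. Using $g(\pi_i)\in(0,1]$ and a mean-value-type inequality along the segment from $\Theta$ to $\hat{\Theta}$, together with Assumption 1, would yield
$$\sum_{i=1}^N |p_{\hat{\Theta},i}-p_{\Theta,i}| \;\leq\; \|\hat{\Theta}-\Theta\|_2 \sup_{\Theta'}\sum_i \|\partial_{\Theta'} p_{\Theta',i}\|_2 \;\leq\; C\|\hat{\Theta}-\Theta\|_2,$$
so it suffices to force $\|\hat{\Theta}_{n_p}-\Theta\|_2 \leq \Omega/C$ on an event of probability at least $1-\delta$.

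Second I would invoke Hoeffding's inequality componentwise. Each coordinate of $\hat{\Theta}_{n_p}$ is an average of $n_p$ i.i.d.\ unbiased estimators bounded in $[a,b]$, so $P(|\hat{\Theta}_{n_p,j}-\Theta_j| > \Omega/(C\sqrt{d}) \mid n_p) \leq 2 e^{-n_p \sigma^2}$. Combining the trivial embedding $\{\|v\|_2 > r\} \subseteq \bigcup_j \{|v_j| > r/\sqrt{d}\}$ with a union bound over the $d$ coordinates gives $P(\|\hat{\Theta}_{n_p}-\Theta\|_2 > \Omega/C \mid n_p) \leq 2d\, e^{-n_p \sigma^2}$. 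To remove the conditioning I would average over the Poisson hierarchy: $n_R \sim \mathrm{Poisson}(\lambda_u \pi R^2)$ and $n_p \mid n_R \sim \mathrm{Poisson}(n_R \lambda_r \tau)$. Two applications of the identity $\mathbb{E}[e^{-sZ}] = \exp(\lambda(e^{-s}-1))$ for $Z\sim\mathrm{Poisson}(\lambda)$ yield
$$\mathbb{E}\!\left[e^{-n_p \sigma^2}\right] \;=\; \exp\!\Bigl(-\lambda_u \pi R^2 \bigl(1 - e^{-\lambda_r \tau(1-e^{-\sigma^2})}\bigr)\Bigr).$$
Setting $2d$ times this quantity $\leq \delta$ and inverting reproduces \eqref{eq:zipf_bound} exactly, while positivity of the argument of the inner logarithm produces the density threshold $\lambda_u > \frac{1}{\pi R^2}\log(2d/\delta)$.

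The main obstacle is the very first step. Because Assumption 1 only bounds a sub-differential of $p_{\Theta,i}$, a plain mean-value theorem is not directly available, so to make the Lipschitz-type step rigorous I would parameterize the segment $t \mapsto \Theta + t(\hat{\Theta}-\Theta)$, write $p_{\hat{\Theta},i}-p_{\Theta,i}$ as an integral of a sub-gradient along this segment, interchange the $\sum_i$ with the integral via Fubini--Tonelli, and then apply Cauchy--Schwarz together with the uniform bound $\sum_i\|\partial_\Theta p_{\Theta,i}\|_2 < C$. Once this is done, the remaining steps are essentially bookkeeping around Hoeffding's inequality and the two-stage Poisson moment-generating-function computation, and the rest of the proof proceeds in parallel with \thrmref{thm:time_complexity_centralized}.
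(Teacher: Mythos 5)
Your proposal is correct and follows essentially the same route as the paper's proof: the reduction of the offloading gap to $\sum_i |p_{\hat{\Theta},i}-p_{\Theta,i}|$, the Taylor/Cauchy--Schwarz step using Assumption 1 to get $C\|\hat{\Theta}-\Theta\|_2$, the coordinatewise Hoeffding bound with a union bound over $d$, and the two-stage Poisson averaging are all exactly the paper's steps. Your more careful integral-of-subgradient treatment of the mean-value step is a minor rigor improvement over the paper's bare appeal to the Taylor remainder, but it does not change the argument.
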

\begin{proof}
See Appendix \ref{app:proof_parameteric_family}.
\end{proof}

From \eqref{eq:zipf_bound}, we see that the bound on the training time is independent of $N$, and from a scaling perspective, the training time scales with $d$, $\lambda_r$ and $\lambda_u$. This amounts to a significant improvement compared to the nonparametric model studied in the previous sections of this paper, where the training time is shown to scale as $N^2 \log N$. A natural extension is to utilize the knowledge obtained from users' interactions with a social community, namely, the source domain samples. In the next subsection, we analyze the time complexity bound employing the TL-based approach for popularity profiles modeled using a parametric family of distributions.

\subsection{Transfer Learning for Parametric Models} \label{subsec:param_tl}
In this subsection, we derive a lower bound on the training time when the BS has access to the source domain samples along with the target domain samples. Let the source domain samples $(X_{s,1},X_{s,2},\ldots, X_{s,m}) \in \mathcal{X}_s^m$ drawn {\iid} from $\mathcal{P}_{\Theta_s}$, where $\Theta_s \in \mathbb{R}^d$. Further, as before, we assume that $\exists$ $f:\mathcal{X}_s \rightarrow \mathbb{R}^d$, an unbiased estimate of $\Theta_s$.  As before, let the BS observe $n_p$ i.i.d. target domain samples from $\mathcal{X}_t^{n_p}$ drawn from $\mathcal{P}_{\Theta}$. An estimate of $\Theta$ based on the available source and target domain samples is obtained as follows:
\begin{enumerate}[(i)]
\item Using the source domain samples an estimate of $\Theta_s$, denoted $\hat{\Theta}_s$, is obtained in manner similar to that of target domain parameter $\Theta$ as explained earlier in this section.
\item Using the target domain samples, an estimate of $\Theta$ denoted $\hat{\Theta}_t$ is obtained as in \eqref{eq:param_est}.
\item The two estimates are fused to get an estimate of $\Theta$ as $\hat{\Theta}_{\text{tl}} \triangleq \lambda \hat{\Theta}_t  + (1 - \lambda) \hat{\Theta}_s$, where $\lambda \in [0,1]$ will be described shortly.
\end{enumerate}

\begin{thrm} \label{thm:time_complexity_centralized_TL_convex_param}
For the family $\mathcal{P}_{\Theta^{'}}$ satisfying \textbf{Assumption 1}, and given the estimator $\hat{\Theta}_{\text{tl}} \triangleq \lambda \hat{\Theta}_t  + (1 - \lambda) \hat{\Theta}_s$, we have $\hat{\mathcal{T}}^* \leq \mathcal{T}^* + \epsilon$ for every $\epsilon > 0$ with a probability of at least $1 - \delta$ if the condition specified by \eqref{eq:thrm6cond} at the top of the next page is satisfied,
\begin{figure*}[!t]
\begin{eqnarray}
\tau \geq  \frac{1}{\lambda_r (1-e^{-\sigma_t^2})} \log \left(\frac{1}{1 - \frac{1}{\lambda_u \pi R^2} \left(\log \frac{2 d}{\delta} + \log \frac{1}{1 - \frac{2 d}{\delta} \exp\left\{-\frac{2m (\bar{\Omega} - \norm{\Theta - \Theta_s}_2)^2}{(b-a)^2}\right\}}\right)}\right) \label{eq:thrm6cond}
\end{eqnarray}
\hrulefill
\end{figure*}
for $\lambda_u >  \frac{1}{\pi R^2}\left(\log \frac{2 d}{\delta} + \log \frac{1}{1 - \frac{2 d}{\delta} \exp\left\{-\frac{2m (\bar{\Omega} - \norm{\Theta - \Theta_s}_2)^2}{(b-a)^2}\right\}}\right)$. This holds for all $D_t < \frac{\Omega}{C} - \lambda \bar G$ and $0<\lambda< \min\left\{\frac{\Omega}{C \bar G}, 1\right\}$. Here, $\bar{\Omega} \triangleq \frac{1}{\lambda}\left(\frac{\Omega}{C} - D_t\right)$, $\sigma_t^2 \triangleq \frac{2 D_t^2}{d (1-\lambda)^2 (b-a)^2}$, $\Omega \triangleq \frac{\tilde{\epsilon}}{\sup_i g(\pi_i)}$, $\tilde \epsilon \triangleq  \frac{R_0 \epsilon}{2 B}$, and $\bar G : = \norm{\Theta - \Theta_s}_2 + (b-a) \sqrt{\frac{1}{2m} \log\frac{2 d}{\delta}}$.
\end{thrm}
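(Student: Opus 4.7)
The plan is to lift the parametric argument of \thrmref{thm:pac_param_pop_distr} into the convex-combination setting of \thrmref{thm:conv_comb}, with the parameter $\Theta$ replacing the distribution $\mathcal{P}$ as the object to be concentrated. First I would reduce the offloading-loss target $\hat{\mathcal{T}}^* \leq \mathcal{T}^* + \epsilon$ to a parameter-estimation guarantee $\|\hat\Theta_{\mathrm{tl}} - \Theta\|_2 \leq \Omega/C$: by \textbf{Assumption~1} and a coordinate-wise mean value theorem applied to each $p_{\Theta,i}$ one obtains $\sum_i |p_{\hat\Theta,i} - p_{\Theta,i}| \leq C\,\|\hat\Theta - \Theta\|_2$, and feeding this through the offloading-loss expression \eqref{eq:mean_througput_expression} together with the standard two-term splitting $\hat{\mathcal{T}}^* - \mathcal{T}^* \leq 2\sup_\Pi|\mathcal{T}(\Pi,\hat{\mathcal{P}}) - \mathcal{T}(\Pi,\mathcal{P})|$ gives exactly that reduction, the constants combining to produce the threshold $\Omega/C$.

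Next I would decompose the TL error as
\[
\hat\Theta_{\mathrm{tl}} - \Theta = \lambda(\hat\Theta_t - \Theta) + (1-\lambda)(\hat\Theta_s - \Theta_s) + (1-\lambda)(\Theta_s - \Theta)
\]
and bound the three pieces separately. The last term is deterministic of size $(1-\lambda)\|\Theta - \Theta_s\|_2$; for the source fluctuation term, each coordinate of a summand lies in $[a,b]$, so coordinate-wise Hoeffding plus a union bound over the $d$ coordinates controls $\|\hat\Theta_s - \Theta_s\|_2$ at the scale $\bar\Omega - \|\Theta - \Theta_s\|_2$ with failure probability at most $2d\exp\{-2m(\bar\Omega - \|\Theta - \Theta_s\|_2)^2/(b-a)^2\}$, which is exactly the quantity appearing inside the inner logarithm of \eqref{eq:thrm6cond}. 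Allocating the total budget $\Omega/C$ as a ``source budget'' $\Omega/C - D_t$ and a ``target budget'' $D_t$ forces the feasibility intervals $D_t < \Omega/C - \lambda\bar G$ and $\lambda < \min\{\Omega/(C\bar G),1\}$ stated in the theorem.

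For the target fluctuation I would repeat \thrmref{thm:pac_param_pop_distr} essentially verbatim. Conditional on the number $n_p$ of requests observed during $[0,\tau]$, per-coordinate Hoeffding plus a $d$-wise union bound gives $\mathbb{P}(\|\hat\Theta_t - \Theta\|_2 > D_t/(1-\lambda) \mid n_p) \leq 2d\,e^{-n_p\sigma_t^2}$. Translating this into a condition on $\tau$ uses the PPP structure exactly as in Theorems~\ref{thm:time_complexity_centralized} and~\ref{thm:pac_param_pop_distr}: condition further on $n_R = |\Phi_u\cap\mathbb{B}(0,R)|$, observe that $n_p\mid n_R \sim \mathrm{Poisson}(n_R\lambda_r\tau)$ so that $\mathbb{E}[e^{-\sigma_t^2 n_p}\mid n_R] = \exp\{-n_R\lambda_r(1-e^{-\sigma_t^2})\tau\}$, and apply a further Chernoff bound on $n_R \sim \mathrm{Poisson}(\lambda_u\pi R^2)$ to produce the double-logarithm structure together with the regime threshold $\lambda_u > \rho_{\mathrm{thresh}}$.

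The only delicate point, and therefore the main obstacle, is the union-bound bookkeeping: three ``bad'' events (source Hoeffding failure, target Hoeffding failure given $n_p$, and the Poisson concentration of $n_p$ via $n_R$) must be combined into a total failure probability of at most $\delta$, while the free parameters $D_t$ and $\lambda$ are chosen so that every $\log(1/(1-\cdot))$ receives a strictly positive argument. Once this accounting is set up identically to the proof of \thrmref{thm:conv_comb} and inverted for $\tau$, the bound \eqref{eq:thrm6cond} follows directly.
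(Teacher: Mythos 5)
Your proposal follows essentially the same route as the paper's proof in \appref{app:estimation_TL_convex_param}: reduce the offloading-loss event to $\norm{\hat\Theta_{\mathrm{tl}}-\Theta}_2 > \Omega/C$ via \textbf{Assumption 1}, the Taylor remainder and Cauchy--Schwarz; split the error into source-fluctuation, source-bias and target-fluctuation pieces by the triangle inequality; control each fluctuation by coordinate-wise Hoeffding with a union bound over the $d$ coordinates; reuse the Poisson/PPP expectation computation of \thrmref{thm:pac_param_pop_distr} for the target term (what you call a Chernoff bound on $n_R$ is really the exact Poisson moment-generating-function evaluation, and the paper sums two failure terms rather than union-bounding three events, but the arithmetic is identical); and invert the resulting bound for $\tau$. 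The one blemish is that your decomposition assigns weight $\lambda$ to the target and $1-\lambda$ to the source while your budgets $\bar\Omega=\frac{1}{\lambda}\left(\frac{\Omega}{C}-D_t\right)$ and $\frac{D_t}{1-\lambda}$ presuppose the opposite assignment --- an inconsistency the paper itself exhibits between the theorem statement and its appendix --- but this is a relabeling of $\lambda\leftrightarrow 1-\lambda$, not a gap.
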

\begin{proof}
See Appendix \ref{app:estimation_TL_convex_param}.
\end{proof}
It is important to note that the aformentioned bound is independent of $N$. In the following section, we provide numerical results to get further insights into the expressions derived in the paper.

\section{Numerical Results} \label{sec:sims}
In this section, we provide numerical results and derive insights into the analyses carried out in the previous sections. The parameter values used in our calculations are as follows: $B = 10^7~\text{bits}$, $R_0 = 10^6~\text{bits/s}$, $\gamma = 100$m, $\lambda_u = 0.001~\text{nodes}/m^2$, $\lambda_r = 1/360~\text{requests/s}$, $\lambda_s= 10^{-5}~\text{nodes}/m^2$, $\delta = 0.02$, $R=2$ Km, $m = 10^5$ samples, $\beta_e = 0.6$ and $\beta_l = 0.2$. $\epsilon$ is chosen as a fraction of a lower bound on the offloading loss, {\ie}, $\mathcal{T}(\Pi, \mathcal{P}) \geq \frac{B}{R_0} \exp\{-\lambda_s \pi \gamma^2\}$. In particular, $\epsilon = \text{fraction} \times \frac{B}{R_0} \exp\{-\lambda_s \pi \gamma^2\}$. Further, $\norm{\mathcal{P}-\mathcal{Q}}_\infty = 0.1 \left(\frac{\epsilon R_0}{2 B N}\right)$ which for the above parameters is of the order of $10/N$.

\begin{figure}
\begin{center}
{\includegraphics[height=6cm,width=9.0cm]{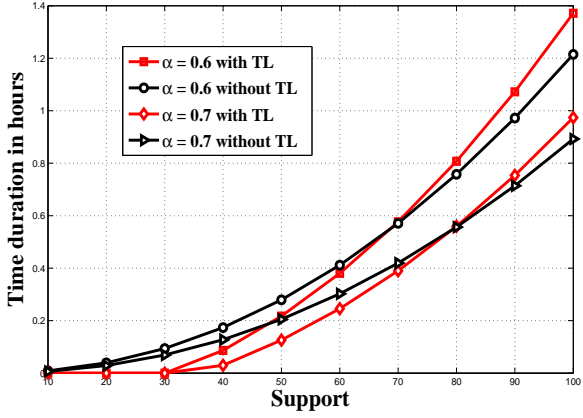}}
\caption{Training duration versus $N$, corresponding to Theorems \ref{thm:lower_bound_nonparam_ntl} and \ref{thm:lower_bound_nonparam_tl}.}
\label{fig:fig1}
\end{center}
\end{figure}

\figref{fig:fig1} shows a plot of the lower bounds on the training duration obtained in Theorems \ref{thm:lower_bound_nonparam_ntl} and \ref{thm:lower_bound_nonparam_tl} as functions of the support $N$. It is seen that, for $N \leq 70$ the TL-based approach provides significant performance improvement. However, for $N > 70$, the performance of the TL-based approach degrades compared to the approach that uses only the source domain samples (and, hence, can be called agnostic). This suggests that for larger values of $N$, the estimate of the popularity profile obtained using \eqref{eq:TL_convex_comb_estimate} performs poorly due to incorrect fusion of the estimates obtained from source and target domains.
\begin{figure}
\begin{center}
{\includegraphics[height=6cm,width=9.0cm]{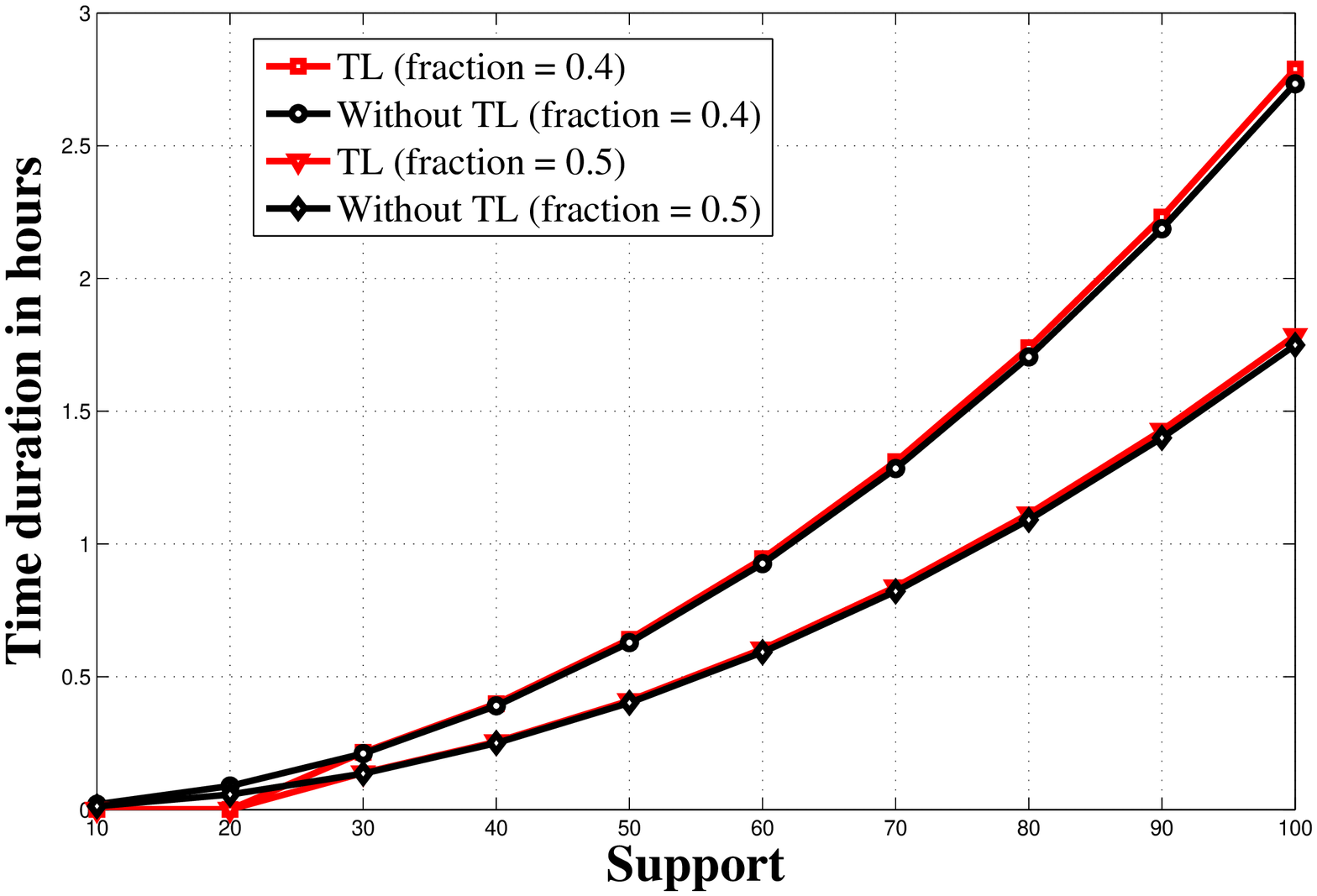}}
\caption{Training duration versus $N$, corresponding to \thrmref{thm:conv_comb}.}
\label{fig:fig2}
\end{center}
\end{figure}
Fig. \ref{fig:fig2} shows the plots of the lower bound in \thrmref{thm:conv_comb} corresponding to the estimate obtained by a fixed linear combination of the source and target estimates (see \eqref{eq:TL_convex_comb_estimate}). As seen in the figure, this does not bring any performance improvement and in fact sometimes performs poorly compared to the source domain agnostic approach. This is because the fixed linear combination does not have the flexibility to adapt to different realizations of the network, proving the sub-optimality of the estimate in \eqref{eq:TL_convex_comb_estimate} compared to that in \eqref{eq:tl_flexible_estimate}. It is also seen that the coefficients used in the estimate that adapts to the varying realizations of the network as in \eqref{eq:TL_convex_comb_estimate} is beneficial.

\begin{figure}
\begin{center}
{\includegraphics[height=6cm,width=9.0cm]{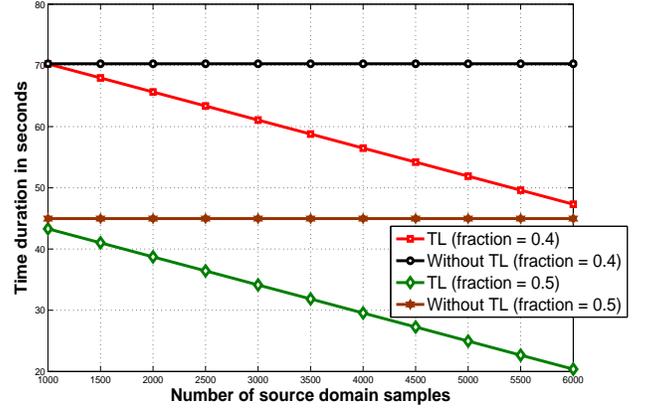}}
\captionof{figure}{Training duration versus $m$ for a fixed $N (= 10)$.}
\label{fig:fig3}
\end{center}
\end{figure}

\begin{figure}
\begin{center}
{\includegraphics[height=6cm,width=9.0cm]{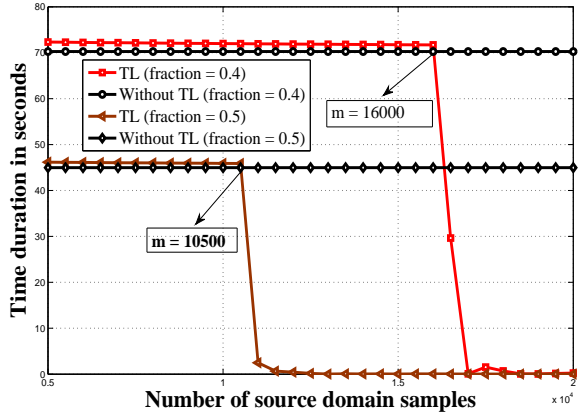}}
\captionof{figure}{Training duration versus $m$ for a fixed $N (= 10)$. Comparing TL-based and agnostic approaches.}
\label{fig:fig4}
\end{center}
\end{figure}

\figref{fig:fig3} shows a plot similar to that in \figref{fig:fig1} but with $N=10$ and varying $m$. It can be seen that the TL-based approach performs better for all $m\geq 1000$ demonstrating its applicability in practice. As seen, the performance is better for higher values of the fraction which corroborates intuition. \figref{fig:fig4} also shows a plot of time duration versus $m$ for a fixed $N=10$. It can be seen that the estimate in \eqref{eq:TL_convex_comb_estimate} outperforms the agnostic approach; however, this is observed at very high values of source domain samples ($m = 10500$ and $m=16000$ for $\text{fraction} = 0.5$ and $\text{fraction} = 0.4$, respectively). Thus, although the TL-based approach using the estimate \eqref{eq:TL_convex_comb_estimate} has some benefits, it is not desirable for practical applications.

\begin{figure}[h!]
\begin{center}
{\includegraphics[height=6cm,width=9.0cm]{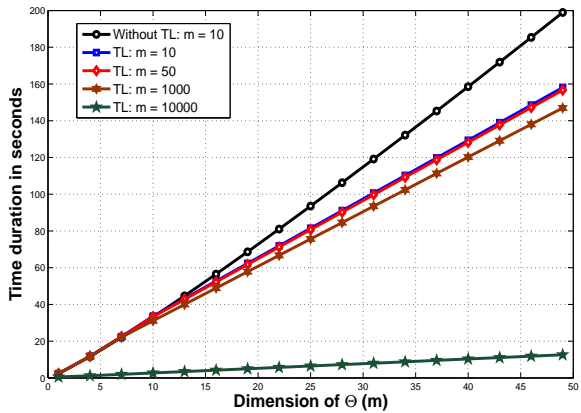}}
\caption{Training duration versus $\Theta$ when the popularity profile is modeled using a parametric family of distributions. $C = 2,~\text{fraction} = 0.6,~\norm{\Theta - \Theta_s}_2 = 0.1,~(b-a)=0.5$.}
\label{fig:fig5}
\end{center}
\end{figure}
The main benefits of the TL-based approach are shown in \figref{fig:fig5} for the parametric family of popularity profiles. It can be seen that the TL-based approach performs significantly better than the source domain agnostic approach for values of $m$ as low as $10$. This is because the number of parameters to be estimated scales with the dimension of $\Theta$ rather than with the support. In particular, as $d$ increases the training duration also increases, which is quite expected. However, the delay scales only linearly in $d$ as compared to quadratic scaling experienced with the nonparametric method.

\section{Concluding Remarks}\label{sec:conclude}
The popularity profile for caching in distributed heterogenous cellular networks was estimated at BS using the available instantaneous demands from users in a time interval $[0,\tau]$. We showed that a training time $\tau$ to achieve an $\epsilon>0$ difference between the achieved cost and the optimal cost was finite, provided the user density was greater than a threshold; $\tau$ was shown to scale as square of the support of the popularity profile. A TL-based approach was proposed to estimate the popularity profile, and a condition was derived under which it performed better than the target domain sample only based approach. Although TL-based approach performs better, the error that is achieved in \eqref{eq:epsilon_error_bound} depends on $\norm{\mathcal{P}-\mathcal{Q}}_\infty$, suggesting that lower the distance between the two distributions better the TL scheme performs. From Proposition \ref{prop:TL_cache}, the benefits of using target domain samples can only be realized with the knowledge of the distance $\norm{\mathcal{P}-\mathcal{Q}}_\infty$. The main benefit of the TL-based approach is recognized when the popularity profile is modeled using a parametric family of distributions. In this case, the delay is independent of $N$ and scales only linearly with the dimension of the distribution parameter. In practice, caching depends on several factors such as the scheduling scheme used, which in turn depends on the channel conditions, QoS requirements, {\etc}. An important assumption that we make is that if the requested file is present in one (or more) of the neighboring SBSs, the transmissions are scheduled within a tolerable time frame. In the case of caching, this time duration could be slightly relaxed, and can be thought of as an abstraction of the scheduling scheme employed. If the file is not present, regardless of the scheduling policy, the file cannot be served locally. Hence, the approach that we have leads to a lower bound, albeit pessimistic, on the training time. Thus, even under pessimistic situations, the training time scales as $N^2\log N$ for achieving an offloading loss that is $\epsilon > 0$ away from the optimal offloading loss.

\vspace{-0.1in}
\section*{Acknowledgement}
K. G. Nagananda would like to thank Chandra R. Murthy, at the Indian Institute of Science, for providing the lab space during the course of this work. The work of H. Vincent Poor was supported in part by the U. S. National Science Foundation under Grant CNS-1456793. The authors thank the anonymous referees for their comments and suggestions.


\bibliographystyle{IEEEtran}
\bibliography{IEEEabrv,caching2015}
\raggedbottom

\appendices
%


\section{Proof of \thrmref{thm_mean_throughput}} \label{app:througput_derivation}
The first term in \eqref{eq:metric}, $\mathbb{E} \sum_{i=1}^N\mathbf{1} \{f_i \notin \mathcal{N}_o\} \mathbf{1} \{f_i \text{ requested}\}$, can be written as
\begin{eqnarray*} \label{eq:mean_througput_expression_1term}
&& \hspace{-2mm} \mathbb{E}_{n_s} \sum_{i=1}^N \mathbb{E} \mathbf{1}\{f_i \text{ requested}\}\Pr\{f_i \notin \mathcal{N}_o \vert \abs{\mathcal{N}_o} = n_s\} \\
\hspace{-2mm}&\stackrel{(a)}{=}&\hspace{-2mm} \mathbb{E}_{n_s}\sum_{i=1}^N\left[\Pr\{f_i \notin a,\text{ for \textbf{any} } a \in \mathcal{N}_o\}\right]^{n_s} p_i  \\
&\stackrel{(b)}{=}& \sum_{i=1}^N\mathbb{E}_{n_s}(1-\pi_i)^{n_s M} p_i \\
&\stackrel{(c)}{=}&\mathbb{E} \sum_{i=1}^N \sum_{j=0}^\infty (1-\pi_i)^{j M} e^{\{-\lambda_s \pi \gamma^2\}} \frac{(\lambda_s \pi \gamma^2)^j}{{j !}} \\
&=& \sum_{i=1}^N  \exp\left\{-\mathcal{U}\right\}  p_i,
\end{eqnarray*}
where $\mathcal{U}\triangleq\lambda_s \pi \gamma^2 \left[1-(1-\pi_i)^M\right]$. In the above exposition, $(a)$ follows from the fact that the proposed random caching scheme is independent across users, $(b)$ is due to the fixed cache size ($M$), and $(c)$ follows since $n_s$ is a PPP with mean $\lambda_s \pi \gamma^2$, where $n_s$ is the number of SBSs in a circular area of radius $\gamma$. This completes the proof of \thrmref{thm_mean_throughput}. $\blacksquare$

\section{Proof of \thrmref{thm:time_complexity_centralized}} \label{app:time_complexity_centralized}
For any $\epsilon>0$, the inequality $\Pr\{\hat{\mathcal{T}}^* \geq \mathcal{T}^* + \epsilon\} \leq \Pr\left\{2 \sup_{\mathbf{1} \succeq \Pi \succeq \mathbf{0}:\mathbf{1}^T \Pi = 1} \abs{\Delta \mathcal{T}} > \epsilon\}\right\}$ is proved, where $\Delta \mathcal{T}\triangleq \mathcal{T}(\Pi,\hat{\mathcal{P}}) - \mathcal{T}(\Pi,\mathcal{P})$. And, $\hat{\mathcal{T}}^*- \mathcal{T}^*$ can be written as (see \cite{vapnik_Chervonenkis1974c})
\begin{eqnarray*}
\hat{\mathcal{T}}^* - \inf_{\Pi} \mathcal{T}(\Pi,{\mathcal{P}})
\leq   \hat{\mathcal{T}^*} -  \hat{\mathcal{T}}  + \sup_{\Pi} \abs{\mathcal{T}(\Pi,\hat{\mathcal{P}}) - \mathcal{T}(\Pi,{\mathcal{P}})} \\
\leq 2\sup_{\Pi} \abs{\mathcal{T}(\Pi,\hat{\mathcal{P}}) - \mathcal{T}(\Pi,{\mathcal{P}})},
\end{eqnarray*}
where $\hat{\mathcal{T}}\triangleq\mathcal{T}(\Pi,\hat{\mathcal{P}})$, thus proving the inequality. Substituting for $\mathcal{T}(\Pi,\hat{\mathcal{P}})$ and $\mathcal{T}(\Pi,{\mathcal{P}})$ from \eqref{eq:mean_througput_expression} we get $\Pr \left\{ \sup_\Pi  \abs{\sum_{i=1}^N g(\pi_i) (\hat{p}_i - p_i)} > \tilde{\epsilon} \right\} $, which can be upper bounded as follows:
\begin{eqnarray}
\nonumber \Pr \left\{ \sup_\Pi {\sum_{i=1}^N g(\pi_i) \hat{\delta}_{p,i}} > \tilde{\epsilon} \right\} \!\!\!
&\leq& \!\!\! \Pr \left\{{ \max_{i=1,2,\ldots,N} \hat{\delta}_{p,i}} > \bar{\epsilon} \right\} \\
\nonumber &\leq& \!\!\! \sum_{i=1}^N \Pr \left\{{ \hat{\delta}_{p,i}} > \bar{\epsilon} \right\} \\
&\leq& \!\!\! 2N \mathbb{E}\left[\exp\left\{-  {2 {\bar{\epsilon}}^2 n_p} \right\}\right], \label{eq:hoeffdings_baseline}
\end{eqnarray}
where $\hat{\delta}_{p,i} \triangleq \abs{\hat{p}_i - p_i}$, $\tilde \epsilon \triangleq  \frac{R_0 \epsilon}{2 B}$, $\bar \epsilon \triangleq \frac{\tilde{\epsilon}}{\sup_\Pi \left\{\sum_{i=1}^N g(\pi_i)\right\} }$, and $g(\pi_i)\triangleq \exp\{-\lambda_s \pi \gamma^2 \left[1 - (1-\pi_i)^M \right]\}$, and the last inequality follows by applying Hoeffdings inequality (see \cite{Devroye2014}) since the estimator $\hat{\mathcal{P}}$ is unbiased and $\pi_i$, $\pi \in [0,1]$. Note that, the expectation in \eqref{eq:hoeffdings_baseline} is with respect to $n_p$. Conditioned on the number $n_R$ of users in the coverage area of BS, $n_p$ is a Poisson distributed random variable with density $n_R \lambda_r \tau$. Therefore, $2 N \mathbb{E} \sum_{n=0}^\infty \exp\left\{-\bar g\right\} \frac{(\lambda_r n_R \tau)^n}{n!} = 2 N \mathbb{E}_{n_R} \exp\{-\lambda_r n_R \tau g^*\}$, where $\bar g \triangleq \left({2 \bar{\epsilon}^2 k } + \lambda_r n_R \tau\right)$ and $g^* \triangleq \left( 1 - \exp\left\{-{2 \bar{\epsilon}^2 }\right\}\right)$ which can further be simplified as
\begin{eqnarray}
\nonumber 2 N \sum_{k=0}^\infty\!\!  \exp\{-\lambda_r k \tau  g^*\} \exp\{-\lambda_u \pi R^2\} \frac{(\lambda_u \pi R^2)^k}{k!}\\ =  2 N\exp\{-\lambda_u \pi R^2 \left(1- \exp\left\{-\lambda_r \tau g^*\right\} \right).
\label{eq:error_bound}
\end{eqnarray}
We see that $\Pr\left\{ \sup_{\Pi} \abs{\Delta \mathcal{T}} > \frac{\epsilon}{2}\right\} \leq \delta$ if \eqref{eq:error_bound} is upper bounded by $\delta$, resulting in
\begin{eqnarray*}
\tau \geq \frac{1}{\lambda_r g^*} \log \left(\frac{1}{1- \frac{1}{\lambda_u \pi R^2} \log \frac{2 N}{\delta}}\right),
\end{eqnarray*}
provided $\lambda_u > \frac{1}{\pi R^2} \log \frac{2 N}{\delta}$,  otherwise $\tau = \infty$, proving \thrmref{thm:time_complexity_centralized}. $\blacksquare$

\section{Proof of \thrmref{thm:time_complexity_centralized_TL}} \label{app:TL_time_complexity}
It is easy to see that $\Pr\{\hat{\mathcal{T}}^* \geq \mathcal{T}^* + \epsilon\} \leq \Pr \left\{ \sup_{1\leq i \leq N} \abs{\hat{p}_i^{(tl)} - p_i} > \bar{\epsilon}\right\}$, where $\bar \epsilon \triangleq \frac{R_0 {\epsilon}}{2 B \sup_\Pi \left\{\sum_{i=1}^N g(\pi_i)\right\} }$ and $g(\pi_i)\triangleq\exp\{-\lambda_s \pi \gamma^2 \left[1 - (1-\pi_i)^M \right]\}$. Denote by $n_{p}$ the total number of requests in the coverage area of the BS. Conditioned on the number of users $n_R$ in the coverage area of the BS, $n_{p}$ is a Poisson distributed random variable with density $n_R \lambda_r$. Further, $\mathbb{E}\left\{ \hat{p}_i^{(tl)} | n_p\right\} = \frac{n_{p}}{n_{p} + m}p_i + \frac{m}{n_{p} + m}q_i$. Using this, we can write
\begin{eqnarray*}
&& \Pr \left\{ \sup_{1\leq i \leq N} \abs{\hat{p}_i^{(tl)}  - \mathbb{E} \hat{p}_i^{(tl)} + \mathbb{E} \hat{p}_i^{(tl)} - p_i} > \bar{\epsilon} \right\} \!\!\!\!\! \\
&\leq& \!\!\!\! \Pr \left\{ \sup_{1\leq i \leq N} \abs{\hat{p}_i^{(tl)}  - \mathbb{E} \hat{p}_i^{(tl)}} + \abs{\mathbb{E} \hat{p}_i^{(tl)} - p_i} > \bar{\epsilon} \right\}\\
&\leq& \!\!\!\!\! \Pr \left\{ \sup_{1\leq i \leq N} \abs{\hat{p}_i^{(tl)}  - \mathbb{E} \hat{p}_i^{(tl)}}  > \bar{\epsilon} -  \sup_{i\in[1,N]} \abs{\mathbb{E} \hat{p}_i^{(tl)} - p_i}\right\} \nonumber\\
&\leq& \!\!\!\!\! \Pr \left\{ \sup_{1\leq i \leq N} \abs{\hat{p}_i^{(tl)}  - \mathbb{E} \hat{p}_i^{(tl)}}  > \bar{\epsilon} -  \frac{m}{n_{p} + m} \norm{\mathcal{P} - \mathcal{Q}}_{\infty} \right\} \nonumber\\
&\leq& \!\!\! \mathbb{E}_{n_{p}}\Pr \left\{ \sup_{1\leq i \leq N} \abs{\hat{p}_i^{(tl)}  - \mathbb{E} \hat{p}_i^{(tl)}}  > \bar{\epsilon} -   \norm{\mathcal{P} - \mathcal{Q}}_{\infty} \left | \right. n_{p} \right\}\nonumber \\
&\leq& \!\!\! N \mathbb{E}_{n_{p}}\Pr \left\{\abs{\hat{p}_i^{(tl)}  - \mathbb{E} \hat{p}_i^{(tl)}}  > \bar{\epsilon} -   \norm{\mathcal{P} - \mathcal{Q}}_{\infty} \left | \right. n_{p} \right\},\nonumber
\end{eqnarray*}
provided $\bar \epsilon >  \norm{\mathcal{P} - \mathcal{Q}}_{\infty}$, where $\norm{\mathcal{P} - \mathcal{Q}}_{\infty} \triangleq \sup_{i\in[1,N]} \abs{q_i - p_i}$. From Hoeffding's inequality,
\begin{eqnarray}
&& \nonumber 2 N \mathbb{E}_{n_{p}} \exp\left\{-2 \epsilon_{pq}^2 (n_{p} + m) \right\} \!\!\! \\ &=& \!\! 2N \mathbb{E}_{n_R} \exp\left\{-(2 \epsilon_{pq}^2 m +\lambda_r n_R \tau)\right\} \sum_{k=0}^\infty \frac{a^k}{k!},\nonumber \\
 &=& \!\!\! 2 N \exp\left\{-2 \epsilon_{pq}^2 m\right\} \mathbb{E}_{n_R} \exp\{-\bar{g}_{pq}\} \nonumber \\
 &=& \!\!\! 2 N \exp\left\{-2 \epsilon_{pq}^2 m\right\} \exp\{-\lambda_u \pi R^2\} \times \nonumber\\
 && \sum_{l=0}^\infty \exp\{- \lambda_r l \tau(1-\exp\{-2 \epsilon_{pq}^2\})\} \frac{(\lambda_u \pi R^2)^l}{l!},
 \end{eqnarray}
where $a \triangleq \lambda_r n_R \tau \exp\left\{-2 \epsilon_{pq}^2 \right\} $, $\epsilon_{pq} \triangleq \bar \epsilon - \norm{\mathcal{P} - \mathcal{Q}}_{\infty}$ and $\bar{g}_{pq}\triangleq \lambda_r n_R \tau(1-\exp\{-2\epsilon_{pq}^2\}) $. Therefore, $2 N \mathbb{E}_{n_{p}} \exp\left\{-2 \epsilon_{pq}^2 (n_{p} + m) \right\} = 2 \exp\left\{-2 \epsilon_{pq}^2 m\right\} \exp\{-\lambda_u \pi R^2 t\}$, where \\ $t \triangleq \left(1 - \exp\{-\lambda_r  \tau\left(1 - \exp\{-2 \epsilon_{pq}^2\} \right)\}\right)$, and is at most $\delta >0$ if
\begin{eqnarray}
\tau \geq \frac{1}{\lambda_r (1-e^{-2 \epsilon_{pq}^2})} \log \left(\frac{1}{1 - \frac{1}{\lambda_u \pi R^2} \left(\log \frac{2N}{\delta} - 2 \epsilon_{pq}^2 m\right)} \right),
\end{eqnarray}
provided $\lambda_u >  \frac{1}{\pi R^2} \left(\log \frac{2N}{\delta} - 2 \epsilon_{pq}^2 m\right)$, otherwise $\tau = \infty$, thus proving \thrmref{thm:time_complexity_centralized_TL}. $\blacksquare$

\section{Proof of \thrmref{thm:time_complexity_centralized_TL_convex}} \label{app:estimation_TL_convex}
\begin{figure*}
\begin{eqnarray}
\Pr \left\{ \abs{\hat{p}_i^{(tl)} - p_i} > \bar{\epsilon} \right\} &=& \Pr \left\{ \abs{\alpha \hat{p}_i^{(s)} + (1 - \alpha) \hat{p}_i^{(t)})- p_i} > \bar{\epsilon} \right\} \label{eq:target_source_bound_tl_convcomb_k1} \\
&\leq& \Pr \left\{ \alpha \abs{(\hat{p}_i^{(s)} - p_i)}+ (1 - \alpha) \abs{(\hat{p}_i^{(t)}- p_i)} > \bar{\epsilon} \right\} \label{eq:target_source_bound_tl_convcomb_k2} \\
&=& \Pr \left\{ \left[\alpha \abs{\hat{p}_i^{(s)} - p_i}+ (1 - \alpha) \abs{\hat{p}_i^{(t)}- p_i} > \bar{\epsilon}\right]  \bigcap  \abs{\hat{p}_i^{(t)}- p_i} > \eta \right\} \\ && + \Pr \left\{ \alpha \abs{\hat{p}_i^{(s)} - p_i} + (1 - \alpha) \abs{\hat{p}_i^{(t)}- p_i} > \bar{\epsilon}  \bigcap  \abs{\hat{p}_i^{(t)}- p_i} \leq \eta \right\}\nonumber \label{eq:target_source_bound_tl_convcomb_k3}\\
&{\leq}&   \Pr \left\{  \abs{\hat{p}_i^{(t)} - p_i}  > \eta \right\} + \Pr \left\{  \abs{\hat{p}_i^{(s)} - p_i}  >\omega  \right\},\label{eq:target_source_bound_tl_convcomb_k4}
\end{eqnarray}
\hrulefill
\end{figure*}
We begin with $\Pr\{\hat{\mathcal{T}}^* \geq \mathcal{T}^* + \epsilon\} \leq \Pr \left\{ \sup_{1\leq i \leq N} \abs{\hat{p}_i^{(tl)} - p_i} > \bar{\epsilon} \right\} \leq \sum_{i=1}^N \Pr \left\{ \abs{\hat{p}_i^{(tl)} - p_i} > \bar{\epsilon} \right\}$,
where $\bar \epsilon \triangleq \frac{R_0 {\epsilon}}{2 B \sup_\Pi \left\{\sum_{i=1}^N g(\pi_i)\right\} }$ and $g(\pi_i)\triangleq \exp\{-\lambda_s \pi \gamma^2 \left[1 - (1-\pi_i)^M \right]\}$. Each term in the summation can be upper bounded as shown in \eqref{eq:target_source_bound_tl_convcomb_k1} - \eqref{eq:target_source_bound_tl_convcomb_k4} at the top of the next page, where $\omega \triangleq \frac{\bar{\epsilon} - (1 - \alpha) \eta}{\alpha} > 0$. From \eqref{eq:error_bound}, we have
$\Pr \left\{  \abs{\hat{p}_i^{(t)} - p_i}  > \eta \right\} = 2 \exp\{-\lambda_u \pi R^2 \left(1- \exp\left\{-\lambda_r \tau g_t^*\right\} \right)$, where $g_t^* \triangleq \left( 1 - \exp\left\{-{2 \eta^2 }\right\}\right)$ and the second term can be bounded as follows:
\begin{eqnarray*}
\Pr \left\{  \abs{\hat{p}_i^{(s)} - p_i}  > \omega \right\} \!\!\! &=& \!\!\! \Pr \left\{  \abs{\hat{p}_i^{(s)} - q_i + q_i - p_i}  > \omega \right\} \\
&\stackrel{(a)}{\leq}& \!\!\! \Pr \left\{  \abs{\hat{p}_i^{(s)} - q_i}   > \omega - \norm{\mathcal{P} - \mathcal{Q}}_\infty\right\},
\end{eqnarray*}
where $(a)$ follows from the triangular inequality and using $\norm{\mathcal{P} - \mathcal{Q}}_\infty = \sup_{i}\abs{p_i - q_i}$. Note that, the inequality (a) is valid if $\omega > \norm{\mathcal{P} - \mathcal{Q}}_\infty$. Using $\mathbb{E} \hat{p}_i = p_i$ and Hoeffding's inequality, we have $\Pr \left\{  \abs{\hat{p}_i^{(s)} - q_i}   > \omega - \norm{\mathcal{P} - \mathcal{Q}}_\infty\right\} \leq 2 \exp\left\{-2 (\omega - \norm{\mathcal{P} - \mathcal{Q}}_\infty)^2 m\right\}$. Therefore,
\begin{eqnarray}
\nonumber \sum_{i=1}^N \Pr \left\{ \abs{\hat{p}_i^{(tl)} - p_i} > \bar{\epsilon} \right\} \leq 2N[\exp\left\{-2 {\bar \omega}^2 m\right\} \\ + \exp\{-\lambda_u \pi R^2 \left(1- \exp\left\{-\lambda_r \tau g_t^*\right\} \right)],
\label{eq:prob_appendD}
\end{eqnarray}
where $\bar \omega = \omega - \norm{\mathcal{P} - \mathcal{Q}}_\infty > 0$.
Finally, it is clear that \eqref{eq:prob_appendD} can upper bounded by $\delta$ provided
\begin{equation}
\tau \geq  \frac{1}{\lambda_r g_t^*} \log \left[\frac{1}{1 - \frac{1}{\lambda_u \pi R^2 } \left[\log\left(A_1\right) + \log\left(B_1\right) \right]}\right],
\end{equation}
where $A_1 =  \frac{2N}{\delta}$, $B_1 = \frac{1}{1 - \left(\frac{2N}{\delta}\right)\exp\{-2{\bar \omega}^2 m\}}$, and $\lambda_u >  \frac{1}{\pi R^2 } \left(\log\left(\frac{2N}{\delta}\right) + \log\left\{\frac{1}{1 - \left(\frac{2N}{\delta}\right)\exp\{-2{\bar \omega}^2 m\}}\right\} \right)$, which is valid if $\left(\frac{2N}{\delta}\right)\exp\{-2{\bar \omega}^2 m\} < 1$. This along with $\omega - \norm{\mathcal{P} - \mathcal{Q}}_\infty > 0$ leads to the constraint stated in \thrmref{thm:time_complexity_centralized_TL_convex}. $\blacksquare$

\section{Proof of \thrmref{thm:pac_param_pop_distr}} \label{app:proof_parameteric_family}
We begin with
\begin{eqnarray*} 
&& \Pr\{\hat{\mathcal{T}}^* \geq \mathcal{T}^* + \epsilon\} \\ &\leq& \Pr \left\{ \sup_{0 \leq \pi \leq 1} g(\pi)\sum_{i=1}^N \abs{{p}_{{\hat{\Theta}_{n_p},i}} - p_{\Theta,i}} > \Omega \right\}
\\ &\stackrel{(a)}{\leq}&  \Pr \left\{ \sum_{i=1}^N\abs{{p}_{{{\hat \Theta}_{n_p}},i} - p_{\Theta,i}} > \Omega \right\},
\end{eqnarray*}
where $(a)$ follows from the fact that $\sup_{0 \leq \pi \leq 1} g(\pi) = 1$, ${p}_{{\hat \Theta}_{n_p},i}$ is the estimate of ${p}_{{\Theta},i}$, $\Omega \triangleq \frac{R_0 \epsilon}{2 B}$ and $g(\pi)\triangleq \exp\{-\lambda_u \pi \gamma^2 \left[1 - (1-\pi_i)^M \right]\}$. By using the remainder form of the Taylor series, ${p}_{{\hat \Theta}_{n_p},i} = p_{\Theta,i} + (\Theta - \hat{\Theta}_{n_p})  \partial p_{\Theta^*,i} \big |_{\Theta^* \in [\Theta, {\hat \Theta}_{n_p}]}$, where $[\Theta, {\hat \Theta}_{n_p}]$ represents the line joining the points $\Theta$ and ${\hat \Theta}_{n_p}$, leading to (recall that the $i$-th component of $\hat{\Theta}_{n_p}$ is denoted by $\hat{\Theta}_{n_p,i}$, $i = 1,2,\ldots,d$)
\begin{eqnarray}
\nonumber && \Pr \left\{ \sum_{i=1}^N  \abs{{p}_{{\hat \Theta}_{n_p},i} - p_{\Theta,i}} > \Omega \right\}\!\!\! \\ \nonumber &\stackrel{(a)}{\leq}& \!\!\! \Pr\left\{\norm{{\hat \Theta}_{n_p} - \Theta}_2 \sup_{\Theta^* \in [\Theta, {\hat \Theta}_{n_p}]}\sum_{i=1}^N\norm{ \partial p_{\Theta^*,i} }_2 > \Omega \right\}\\
\nonumber  &\leq&\!\!\! \Pr\left\{\norm{{\hat \Theta}_{n_p} - \Theta}_2^2  > \frac{{\Omega}^2}{C^2} \right\}\\
\nonumber  &\stackrel{(b)}{\leq}&\!\!\!  \Pr\left\{\sup_{1\leq i \leq d} \abs{{\hat \Theta}_{n_p,i} - \Theta_i}^2  > \frac{{\Omega}^2}{d C^2} \right\}\\
&\leq& d \Pr\left\{\abs{{\hat \Theta}_{n_p,i} - \Theta_i}^2  > \frac{{\Omega}^2}{d C^2} \right\},
\end{eqnarray}
where $(a)$ follows from the Cauchy-Schwartz inequality and \textbf{Assumption 1} in \secref{sec:param_pac_bound}, and $(b)$ follows from the fact that $\norm{\hat \Theta_{n_p} - \Theta}_2^2 = \sum_{i=1}^d \abs{\hat{\Theta}_{n_p,i} - \Theta_i}^2 \leq d \sup_{1\leq i \leq d} \abs{\hat{\Theta}_{n_p,i} - \Theta_i}^2$.

First, note that for all $i$, $\hat{\Theta}_{n_p,i}$ is an unbiased estimate of $\Theta_i$. Further, $a \leq \Theta_j \leq b$ for $j=1,2,\ldots,d$. Thus, by applying Hoeffding's inequality, we have
\begin{equation} \label{eq:thetaij_bound}
d \Pr\left\{\abs{\hat{\Theta}_{n_p,i} - \Theta_i}^2  > \frac{{\Omega}^2}{d C^2} \right\} \leq 2 d \mathbb{E}_{n_p}\exp\left\{-n_p \sigma^2 \right\},
\end{equation}
where $\sigma^2 \triangleq \frac{2 \Omega^2}{d C^2 (b-a)^2}$. Conditioned on the number of users (denoted $n_R$) in a radius of $R$ around the BS, $n_p$ is PPP with density $\lambda_r \tau n_R$. Using this fact in \eqref{eq:thetaij_bound}, we can write
\begin{eqnarray}
\nonumber &&  d \Pr\left\{\abs{\hat{\Theta}_{n_p,i} - \Theta_i}^2  > \frac{{\Omega}^2}{d C^2} \right\}\!\!\!\! \\ \nonumber &\leq& \!\!\!\! 2 d \mathbb{E}_{n_R}  \left\{ \mathbb{E}_{n_p} \left[ \exp\left\{-n_p \sigma^2 \right\} | n_p\right]\right\} \\
\nonumber &=&\!\!\!\!  2 d \mathbb{E}_{n_R}   \sum_{k=0}^\infty  \exp\{-k \sigma^2\} e^{-n_R \lambda_r \tau} \frac{(n_R \lambda_r \tau)^k}{k!} \\
\nonumber &=&\!\!\!\! 2 d \mathbb{E}_{n_R}  e^{-n_R \lambda_r \tau} \sum_{k=0}^\infty   \frac{(n_R \lambda_r \tau e^{-\sigma^2})^k}{k!}
\\ \nonumber &=& 2 d \mathbb{E}_{n_R}   \exp\{-n_R \lambda_r \tau (1 + e^{-\sigma^2})\}\\
\nonumber &=&\!\!\!\! 2 d    \sum_{k=0}^\infty   \exp\{-k \lambda_r \tau (1 + e^{-\sigma^2})\} \frac{(\lambda_u \pi R^2)^k}{k!} e^{-\lambda_u \pi R^2}
\\ &=& f_{\sigma^2}(\tau), \label{eq:fsigma_ntl_param}
\end{eqnarray}
where $f_{\sigma^2}(\tau) \triangleq 2 d \exp\left\{-\lambda_u \pi R^2 \left(1 - \exp\{-\lambda_r \tau (1-e^{-\sigma^2})\}\right)\right\}$ is a monotonically decreasing function of $\tau$ for all $\tau > 0$. Thus, $f_{\sigma^2}(\tau) \leq \delta$ if $\tau > \frac{1}{\lambda_r (1-e^{-\sigma^2})} \log \left(\frac{1}{1 - \frac{1}{\lambda_u \pi R^2} \log \frac{2 d}{\delta}}\right)$, for $\lambda_u > \frac{1}{\pi R^2} \log \frac{2 d}{\delta}$, proving \thrmref{thm:pac_param_pop_distr}. $\blacksquare$

\section{Proof of \thrmref{thm:time_complexity_centralized_TL_convex_param}} \label{app:estimation_TL_convex_param}
\begin{figure*}[!t]
\begin{eqnarray}
\Pr\{\hat{\mathcal{T}}^* \geq \mathcal{T}^* + \epsilon\} &\leq& \Pr\left\{ \norm{\hat{\Theta}_\text{tl} - {\Theta}}_2 > \frac{\Omega}{C}\right\} \label{eq:tl_param_twoterms1} \\
&\stackrel{(a)}{\leq}&  \Pr\left\{\lambda \norm{\hat{\Theta}_\text{s} - {\Theta}}_2 + (1-\lambda) \norm{\hat{\Theta}_\text{t} - {\Theta}}_2 > \frac{\Omega}{C}\right\} \label{eq:tl_param_twoterms2} \\
\nonumber &=& \Pr\left\{\lambda \norm{\hat{\Theta}_\text{s} - {\Theta}}_2 + (1-\lambda) \norm{\hat{\Theta}_\text{t} - {\Theta}}_2 > \frac{\Omega}{C} \bigcap \mathcal{E}\right\} \\   && + \Pr\left\{\lambda \norm{\hat{\Theta}_\text{s} - {\Theta}}_2 + (1-\lambda) \norm{\hat{\Theta}_\text{t} - {\Theta}}_2 > \frac{\Omega}{C} \bigcap \mathcal{E}^c\right\} \label{eq:tl_param_twoterms3} \\
&\stackrel{(b)}{\leq}& \Pr\left\{\lambda \norm{\hat{\Theta}_\text{s} - {\Theta}}_2 > \frac{\Omega}{C} - D_t \right\} + \Pr\left\{  \norm{\hat{\Theta}_\text{t} - {\Theta}}_2 > \frac{D_t}{1-\lambda} \right\}, \label{eq:tl_param_twoterms4}
\end{eqnarray}
\hrulefill
\end{figure*}
We begin with
\begin{eqnarray*}
\Pr\{\hat{\mathcal{T}}^* \geq \mathcal{T}^* + \epsilon\}\!\!\! &\leq& \!\!\! \Pr \left\{ \sup_{0 \leq \pi \leq 1} g(\pi)\sum_{i=1}^N \abs{{p}_{{\hat{\Theta}_{\text{tl}},i}} - p_{\Theta,i}} > \tilde{\epsilon} \right\}\\ &\leq&  \Pr \left\{ \sum_{i=1}^N\abs{{p}_{{{\hat \Theta}_{\text{tl}}},i} - p_{\Theta,i}} > \Omega \right\},
\end{eqnarray*}
where ${p}_{{\hat \Theta}_{\text{tl}},i}$ is the estimate of ${p}_{{\Theta},i}$ using the TL-based approach described in \secref{subsec:param_tl}, $\Omega \triangleq  \frac{R_0 \epsilon}{2 B}$, and $g(\pi)\triangleq \exp\{-\lambda_u \pi \gamma^2 \left[1 - (1-\pi_i)^M \right]\}$. Note that, $\hat{\Theta}_\text{tl} \triangleq \lambda \hat{\Theta}_s + (1-\lambda) \hat{\Theta}_t$. Further, from the remainder form of the Taylor series around ${\Theta}$ (true parameter), we get $p_{\hat{\Theta}_{\text{tl},i}} = p_{{\Theta},i} + (\hat{\Theta}_\text{tl} - {\Theta}) \partial p_\Theta |_{\Theta \in [\Theta_\text{tl},\Theta]}$, $i =1,2,\ldots,N$, which implies that
\begin{eqnarray*}
\sum_{i=1}^N \abs{{p}_{{{\hat \Theta}_{\text{tl}}},i} - p_{\Theta,i}} &=& \sum_{i=1}^N \abs{(\hat{\Theta}_\text{tl} - {\Theta}) \partial p_\Theta |_{\Theta \in [\Theta_\text{tl},\Theta]}}\\ &\stackrel{(a)}{\leq}& \norm{\hat{\Theta}_\text{tl} - {\Theta}}_2 \sum_{i=1}^N\!\! \norm{\partial p_\Theta |_{\Theta \in [\Theta_\text{tl},\Theta]}}_2 \\ &\stackrel{(b)}{\leq}& C \norm{\hat{\Theta}_\text{tl} - {\Theta}}_2,
\end{eqnarray*}
where $(a)$ follows from Cauchy-Schwartz inequality and $(b)$ follows from \textbf{Assumption 1} in \secref{sec:param_pac_bound}. Therefore we have \eqref{eq:tl_param_twoterms1} - \eqref{eq:tl_param_twoterms4} at the top of this page, where $(a)$ follows from using $\hat{\Theta}_\text{tl} = \lambda \hat{\Theta}_s + (1-\lambda) \hat{\Theta}_t$ followed by the triangle inequality. Here, $\mathcal{E} \triangleq \{\norm{\hat{\Theta}_\text{t} - {\Theta}}_2 < \frac{D_t}{1-\lambda}\}$, and we let $D_t < \Omega/C$. The first term can be expressed as follows:
\begin{eqnarray}
&& \nonumber \Pr\left\{ \norm{\hat{\Theta}_\text{s} - {\Theta}}_2 > \frac{1}{\lambda}\left(\frac{\Omega}{C} - D_t \right) \right\}\\ \nonumber &\leq&  \Pr\left\{ \norm{\hat{\Theta}_\text{s} - {\Theta}_s}_2 + \norm{{\Theta}_\text{s} - {\Theta}}_2 > \bar{\Omega}^2 \right\} \\
\nonumber &\leq& \Pr\left\{ \sup_{1\leq i \leq d} \abs{\hat{\Theta}_\text{s,i} - {\Theta}_{s,i}}^2   > \left(\bar{\Omega} - \norm{{\Theta}_\text{s} - {\Theta}}_2\right)^2 \right\}\\
&\leq& d \Pr\left\{ \abs{\hat{\Theta}_\text{s,i} - {\Theta}_{s,i}}^2   > \left(\bar{\Omega} - \norm{{\Theta}_\text{s} - {\Theta}}_2\right)^2 \right\},
\label{eq:tl_param_twoterms5}
\end{eqnarray}
where $\bar{\Omega} \triangleq \frac{1}{\lambda}\left(\frac{\Omega}{C} - D_t\right) > \norm{{\Theta}_\text{s} - {\Theta}}_2$. However, $\hat{\Theta}_{s,i} \in [a,b]$ is an unbiased estimator of $\Theta_{s,i}$. Therefore, by Hoeffding's inequality, we can write
\begin{eqnarray}
\nonumber d \Pr\left\{ \abs{\hat{\Theta}_\text{s,i} - {\Theta}_{s,i}}^2   > \left(\bar{\Omega} - \norm{{\Theta}_\text{s} - {\Theta}}_2\right)^2 \right\} \\ \leq 2d \exp\left\{-\frac{2 m \left(\bar{\Omega} - \norm{{\Theta}_\text{s} - {\Theta}}_2\right)^2}{(b-a)^2} \right\}. \label{eq:tl_param_firstterm}
\end{eqnarray}

Next, we have
\begin{eqnarray}
\nonumber & \Pr\left\{  \norm{\hat{\Theta}_\text{t} - {\Theta}}_2 > \frac{D_t}{1-\lambda} \right\} \leq \\ & 2 d \exp\left\{-\lambda_u \pi R^2 \left(1 - \exp\{-\lambda_r \tau (1-e^{-\sigma_t^2})\}\right)\right\}, \label{eq:tl_param_secondterm}
\end{eqnarray}
where $\sigma_t^2:= \frac{D_t^2}{2(b-a)^2 (1-\lambda)^2}$, and the inequality follows from \eqref{eq:fsigma_ntl_param} by replacing $\Omega^2/dC^2$ with $\frac{D_t^2}{(1-\lambda)^2}$.

Therefore, $\Pr\{\hat{\mathcal{T}}^* \geq \mathcal{T}^* + \epsilon\}$ will be upperbounded by
\begin{eqnarray*}
&& 2d \exp\left\{-\frac{2 m \left(\bar{\Omega} - \norm{{\Theta}_\text{s} - {\Theta}}_2\right)^2}{(b-a)^2} \right\}\\ && + 2 d \exp\left\{-\lambda_u \pi R^2 \left(1 - e^{\{-\lambda_r \tau (1-e^{-\sigma_t^2})\}}\right)\right\},
\end{eqnarray*}
which is less than or equal to $\delta$ if

\begin{eqnarray*}
\tau \geq \frac{1}{\lambda_r (1-e^{-\sigma_t^2})}\log \left(\frac{1}{1 - \frac{1}{\lambda_u \pi R^2} \left(\log \frac{2 d}{\delta} + \log C_1 \right)}\right),
\end{eqnarray*}
for $\lambda_u > \frac{1}{\pi R^2}\left(\log \frac{2 d}{\delta} + \log \frac{1}{1 - \frac{2 d}{\delta} \exp\left\{-\frac{2m (\bar{\Omega} - \norm{\Theta - \Theta_s}_2)^2}{(b-a)^2}\right\}}\right)$, where
\begin{eqnarray*}
C_1 = \frac{1}{1 - \frac{2 d}{\delta} \exp\left\{-\frac{2m (\bar{\Omega} - \norm{\Theta - \Theta_s}_2)^2}{(b-a)^2}\right\}}
\end{eqnarray*}
These together with the conditions $\bar{\Omega} > \norm{{\Theta}_\text{s} - {\Theta}}_2$ and $\frac{2 d}{\delta} \exp\left\{-\frac{2m (\bar{\Omega} - \norm{\Theta - \Theta_s}_2)^2}{(b-a)^2}\right\} < 1$ proves \thrmref{thm:time_complexity_centralized_TL_convex_param}.  $\blacksquare$

\clearpage

\end{document}